\documentclass[american,aps,pra,reprint,floatfix,nofootinbib,superscriptaddress]{revtex4-1}
\usepackage[unicode=true,pdfusetitle, bookmarks=true,bookmarksnumbered=false,bookmarksopen=false, breaklinks=false,pdfborder={0 0 0},backref=false,colorlinks=false]{hyperref}
\hypersetup{colorlinks,linkcolor=red,citecolor=myurlcolor,urlcolor=myurlcolor}
\usepackage{graphics,epstopdf,graphicx,amsthm,amsmath,amssymb,mathptmx,braket,colortbl,color,bm,framed,mathrsfs}
\usepackage[T1]{fontenc}
\usepackage[up]{subfigure}
\usepackage{tikz}

\definecolor{myurlcolor}{rgb}{0,0,0.9}

\newcommand{\proj}[1]{| #1\rangle\!\langle #1 |}

\newcommand{\iinner}[2]{\langle #1 | #2\rangle}

\DeclareMathOperator{\trace}{Tr}
\newcommand{\Ptr}[2]{\trace_{#1}\Pa{#2}}
\newcommand{\Tr}[1]{\Ptr{}{#1}}
\newcommand{\Innerm}[3]{\left\langle #1 \left| #2 \right| #3 \right\rangle}
\newcommand{\Pa}[1]{\left[#1\right]}

\newcommand{\norm}[1]{\left\lVert #1 \right\rVert}

\theoremstyle{plain}
\newtheorem{thm}{Theorem}

\newtheorem{prop}[thm]{Proposition}

\newcommand*{\myproofname}{Proof}

\def\ot{\otimes}
\def\complex{\mathbb{C}}

\def\cI{\mathcal{I}}
\def\cA{\mathcal{A}}
\def\cG{\mathcal{G}}

\def\cD{\mathcal{D}}

\def\cH{\mathcal{H}}

\def\md{{(\mathrm{d})}}

\makeatother

\begin{document}

  \author{Kaifeng Bu}
 \email{bkf@zju.edn.cn}
 \affiliation{School of Mathematical Sciences, Zhejiang University, Hangzhou 310027, PR~China}
 \author{Namit Anand}
 \email{namitniser11@gmail.com}
 \affiliation{Department of Physics, National Institute of Science Education and Research, Bhubaneswar, 752050, India}
 \affiliation{Harish-Chandra Research Institute, Allahabad, 211019, India}
 \author{Uttam Singh}
 \email{uttamsingh@hri.res.in}
 \affiliation{Harish-Chandra Research Institute, Allahabad, 211019, India}
\affiliation{Homi Bhabha National Institute, Training School Complex, Anushakti Nagar, Mumbai 400085, India}

\title{Asymmetry and coherence weight of quantum states}

\begin{abstract}
The asymmetry of quantum states is an important resource in quantum information processing tasks such as quantum metrology and quantum communication. In this paper, we introduce the notion of \emph{asymmetry weight} --- an operationally motivated asymmetry quantifier in the resource theory of asymmetry. We study the convexity and monotonicity properties of asymmetry weight and focus on its interplay with the corresponding semidefinite programming (SDP) forms along with its connection to other asymmetry measures. Since the SDP form of asymmetry weight is closely related to asymmetry witnesses, we find that the asymmetry weight can be regarded as a (state-dependent) asymmetry witness. Moreover, some specific entanglement witnesses can be viewed as a special case of an asymmetry witness --- which indicates a potential connection between asymmetry and entanglement. We also provide an operationally meaningful coherence measure, which we term \emph{coherence weight}, and investigate its relationship to other coherence measures like the robustness of coherence and the $l_1$ norm of coherence. In particular, we show that for Werner states in any dimension $d$, all three coherence quantifiers, namely, the coherence weight, the robustness of coherence, and the $l_1$ norm of coherence, are equal and are given by a single letter formula.
\end{abstract}

\maketitle

\section{Introduction}
The role of symmetry in physics is essentially \textit{twofold} --- it provides both a \textit{constraint} on the dynamics and a \textit{simplification} in the structure of a theory. The special theory of relativity, for example, is a theory based on the constraints that the laws of physics remain invariant in all inertial frames of reference and that the speed of light in vacuum is the same for all observers, regardless of the state of motion of the light source. These constraints manifest themselves in the form of Lorentz invariance (or Poincar\'e invariance, more generally) of physical quantities and in turn, provide simplifications in the calculations in this theory. Similarly, in quantum theory, the presence of continuous symmetries like space and time translation invariance, and discrete symmetries like parity and time reversal, often helps in simplifying a given problem. At the same time, they can also manifest themselves as constraints in the form of \textit{superselection} rules --- postulated rules that forbid the preparation of quantum states that exhibit coherence between eigenstates of certain observables.\footnote{Note that superselection rules need not necessarily originate from an underlying symmetry.}\\

For ubiquitous physical systems, dynamics can be so complex that the only way to characterize their evolution is through the study of underlying symmetries --- which could, otherwise, be so tortuous that one could not possibly hope to study them. Symmetry, therefore, takes a pivotal stance in the fundamental process of deciphering the nature of the physical world. It is, then, not hard to see that if the dynamics of a physical system respects certain symmetries, then the states evolving under such processes cannot generate any more asymmetry than they already began with --- which motivates the popular dictum --- \textit{symmetric dynamics cannot generate asymmetry}. However, not all asymmetry is bad; given that local operations and classical communication cannot generate entanglement, entangled states can be seen as asymmetry-carrying states which, as is known, are extremely useful when it comes to quantum information processing.\\

In this paper, we would like to take the outlook of symmetries as constraints and use them to construct the corresponding resource theories. Once we have identified the presence of a symmetry in a given scenario, the constraints arise naturally. Every constraint on quantum operations, in turn, defines a resource theory ---determining how quantum states that cannot be prepared under the constraint may be used to outmaneuver the restriction. A resource theory is usually composed of two basic elements: the free states and the free operations. The set of allowed states (operations) under the given constraint is what we call the set of ``free'' states (operations). States of a physical system that do not satisfy the said symmetry are called \textit{asymmetric states} and may become useful as a resource for various tasks in the presence of the constraint. This is precisely the content of resource theories of asymmetry --- the quantification and manipulation of asymmetric states as a resource.\\

A mathematical entity is called symmetric if it is invariant under the action of a symmetry group ${\sf G}$. The resource theory of asymmetry is then defined with respect to a desired representation of a symmetry group and has led to a plethora of interesting results in the area of quantum information theory \cite{Bartlett2007,Gour2008,Gour2009,Marvian2012,Marvian2013,Marvian14,Marvian2014}. One of the earliest resource theories is that of quantum entanglement \cite{HorodeckiRMP09}, which is a basic resource for various quantum information processing protocols such as
superdense coding \cite{Bennett1992}, remote state preparation \cite{Pati2000,Bennett2001} and quantum teleportation \cite{Bennett1993}. Other notable examples include the resource theories of thermodynamics \cite{Fernando2013}, coherence \cite{Baumgratz2014,Girolami2014,Streltsov2015,Winter2016,Killoran2016,Chitambar2016,Chitambar2016a}, superposition \cite{Theurer2017}, and steering \cite{Rodrigo2015}. Interestingly, the notion of resource theories can be generalized to include any description or knowledge that one may have of a physical state \cite{Rio2015}. The asymmetry of states is germane to quantum information theory and has interesting applications ranging from quantum metrology \cite{Giovannetti2004, Giovannetti2006, Giovannetti2011} to quantum communication \cite{Duan2011,Gisin2007}. Interesting experimental progress has been reported in this direction recently; for example, metrologically useful asymmetry and entanglement were detected in an all-optical experiment by studying how these resources affect the speed of evolution of a quantum system under a unitary transformation \cite{Zhang2017}. \\

One of the main advantages that a resource theory offers is the lucid quantitative and operational description as well as the manipulation of the relevant resources at one's disposal. The robustness-based quantifiers capture the robustness of a given resource to noise and form an operationally powerful method to quantify the resource. The quantifiers that are obtained from this method include the robustness of entanglement \cite{Vidal1999},  the robustness of steering \cite{Piani2015PRL}, the robustness of asymmetry \cite{Piani2016}, and the robustness of coherence \cite{Napoli2016}. Another important class of quantifiers which are known as the resource weight-based quantifiers is defined by the smallest amount of resource needed to prepare a given state. The best separable approximation of entangled states \cite{Lewenstein1998}, the steering weight \cite{Skrzypczyk2014}, and the measurement incompatibility weight \cite{Pusey2015} are some examples of the same. Fortunately, both the robustness-based and the weight-based quantifiers are easy to calculate numerically since they can be characterized as the solutions to the corresponding semidefinite programming (SDP) forms \cite{Piani2016, semidefinite1996}. \\

In this paper, we introduce weight-based quantifiers in the resource theories of asymmetry and coherence and term them as the asymmetry and coherence weight, respectively. We then prove several properties such as their convexity and monotonicity under free operations and also provide the corresponding SDP forms which make the numerical calculations tractable. The SDP form indicates that the asymmetry weight can be regarded as a state-dependent asymmetry witness and some entanglement witnesses may be viewed as a special asymmetry witness (see also, Ref. \cite{Girolami2015A} for a discussion on the connection between entanglement and asymmetry witnesses). Additionally, we find interesting relationships between coherence (asymmetry) weight and other coherence (asymmetry) measures. For pure coherent (asymmetric) states, the coherence (asymmetry) weight is always equal to 1 --- suggesting the coarse-grained nature of the coherence (asymmetry) weight. Then, we consider a broad class of mixed bipartite quantum states, namely, the generalized $X$ states, and find analytical expressions for their coherence weight. Moreover, for Werner states, we show that the coherence weight, the robustness of coherence, and the $l_1$ norm of coherence are all equal. Furthermore, in the context of distribution of coherence, we provide some useful inequalities between the coherence weight (robustness of coherence) of bipartite quantum states and that of their marginals.\\

The paper is organized as follows. We start by giving an exposition to the resource theories of asymmetry and coherence and a brief overview of semidefinite programming in Sec. \ref{sec:pre}. In Sec. \ref{sec:asym}, we give the definition of asymmetry weight and investigate the convexity and monotonicity properties and show how to cast it in the form of a semidefinite program. Additionally, we find various relationships between the asymmetry weight and other asymmetry measures. Moreover, we give the definition of coherence weight and investigate the properties of coherence weight in a similar spirit to the asymmetry weight in Sec. \ref{sec:coh}, and in particular, we obtain explicit analytical results for Werner states, Gisin states, and in general, the generalized $X$ states (Sec. \ref{subsec:exact_coh}). Finally, we conclude with a brief discussion and overview of the results obtained in this paper in Sec. \ref{sec:con}.

% %================================%
\section{Preliminaries}\label{sec:pre}
% %================================%

\subsection{Resource theory of asymmetry}
The resource theory of asymmetry with respect to a given representation of a symmetry group ${\sf G}$ has been used extensively to distinguish and quantify the symmetry-breaking properties of both the states and the operations \cite{Gour2008,Marvian2012,Marvian2013}. Given a Hilbert space $\cH$ and the convex set ${\mathscr{D}}(\cH)$ of density operators acting on it, let us consider a symmetry group ${\sf G}$ with an associated unitary representation  $\{U_g\}_{g \in {\sf G}}$ on $\cH$. The free states in the resource theory of asymmetry are called \textit{symmetric} states \cite{Gour2008,Marvian2012,Marvian2013} and the set of symmetric states
in $\cD(\cH)$ is defined as
\begin{eqnarray*}
\mathcal{J}=\set{\sigma\in\cD(\cH):\mathbf{U}_g(\sigma) =\sigma, \forall g\in \sf G },
\end{eqnarray*}
where $\mathbf{U}_g(\sigma):=U_g\sigma U^\dag_g$ \cite{Gour2008,Marvian2012,Marvian2013,Piani2016}.
The set $\mathcal{J}$ can also be written as
\begin{eqnarray}
\mathcal{J}=\set{\sigma\in\cD(\cH):\mathcal{G}(\sigma)=\sigma},
\end{eqnarray}
where $\mathcal{G}(\sigma)=\frac{1}{|\sf G|}\sum_g \mathbf{U}_g(\sigma)$ is the group average \cite{Gour2008,Marvian2012,Marvian2013,Piani2016}. The free operations that we consider in this paper are the selective covariant operations with respect to $\sf G$ \cite{Marvian2012}. For any such quantum operation $\Phi=\sum_i\Phi_i$, then, $[\Phi_i, \mathbf{U}_g]=0$ for any $i$, $\forall g\in \sf G$ \cite{Marvian2012}. For example, the Kraus representation of a quantum operation $\Phi$ can be written in the above form by considering the suboperation $\Phi_i$ as: $\Phi_i (\rho) = K_i \rho K^{\dag}_i$.

Given a quantum state $\rho\in \cD(\cH)$, the relative entropy of asymmetry $\cA_r$ \cite{Marvian2014} and the robustness of asymmetry $\cA_R$ \cite{Piani2016} are defined, respectively, as
\begin{eqnarray}
\cA_r(\rho)&=&S(\rho||\mathcal{G}(\rho))=S(\mathcal{G}(\rho))-S(\rho),\\
\cA_R(\rho)&=&\min_{\tau}\left\{ s\geq0: \frac{\rho+s\tau}{1+s}\in \mathcal{J}, \tau \in \cD(\cH) \right\}.
\end{eqnarray}
Here $S$ is the von Neumann entropy and, for a state $\rho$, $S(\rho) = -\mathrm{Tr}[\rho \ln \rho]$. \\ 

\subsection{Resource theories of quantum coherence}
Given a fixed reference basis, say \{$|i \rangle_{i=0,\ldots,d-1}$\} for some $d$-dimensional Hilbert space, $\mathcal{H}^d$, any quantum state which is diagonal in the reference basis is called an incoherent state and is a free state in the resource theory of coherence. However, note that there is still no general consensus on the set of allowed operations in the resource theory of coherence and, for example, we have resource theories of coherence based on incoherent operations and symmetric operations \cite{Gour2008,Marvian14,Baumgratz2014,Chitambar2016a}. More details can be found in a recent review on the resource theories of coherence \cite{streltsov2016quantum}. \\

In this paper, we consider the resource theory of coherence based on incoherent operations \cite{Baumgratz2014}. Let $\mathcal{I}$ be the set of all incoherent states. An operation $\Phi$ is called an incoherent operation if the set of Kraus operators $\{K_i\}$ of $\Phi$ is such that $K_i\mathcal{I} K^{\dag}_i\subseteq \mathcal{I}$ for each $i$. For a $d$-dimensional quantum system in a state $\rho$ and a fixed reference basis $\{\ket{i}\}$, the $l_1$ norm of coherence $C_{l_1}(\rho)$ \cite{Baumgratz2014}, the relative entropy of coherence $C_r(\rho)$ \cite{Baumgratz2014},
 and the robustness of coherence $C_R(\rho)$ \cite{Napoli2016} are defined, respectively, as
\begin{eqnarray}
C_{l_1}(\rho) &=& \sum_{\substack{{i,j=0}\\{i\neq j}}}^{d-1} |\bra{i}\rho\ket{j}|;\\
C_r(\rho)&=&S(\rho^{\md})-S(\rho),~ \rho^{\md}=\sum_k\rho_{kk}\proj{k};\\
C_R(\rho)&=&\min_{\tau}\left\{s\geq0: \frac{\rho+s\tau}{1+s}\in \mathcal{I}, \tau \in \cD(\cH) \right\}.
\end{eqnarray}

\subsection{Semidefinite programming}
SDP is a powerful tool in combinatorial optimization, which is a generalization of linear programming problems \cite{semidefinite1996}. A semidefinite program over $\mathcal{X}=\mathbb{C}^N$ and $\mathcal{Y}=\mathbb{C}^M$ is defined as a triple ($\Psi$, $C$, $D$), where $\Psi$ is a Hermiticity-preserving map from $\mathcal{L(X)}$ (linear operators on $\mathcal{X}$) to $\mathcal{L(Y)}$ (linear operators on $\mathcal{Y}$), $C\in$ Herm($\mathcal{X}$) (Hermitian operators over $\mathcal{X}$), and $D\in$ Herm($\mathcal{Y})$ (Hermitian operators over $\mathcal{Y}$).
There is a pair of optimization problems associated with every semidefinite program ($\Psi$, $C$, $D$), known as the primal and the dual problems. The standard form of a semidefinite program (that is typically used for general conic programming) is \cite{watrous2009}
\begin{equation}
\begin{matrix}
\textrm{\underline{Primal problem}} \textrm{                          } &  \textrm{\underline{Dual problem}} \vspace{2mm}\\ 
\textrm{maximize: } \langle C,X \rangle, \textrm{                      } & \textrm{minimize: } \langle D,Y \rangle, \\
\textrm{subject to: } \Psi(X) \leq D, \textrm{                      } & \textrm{subject to: } \Psi^*(Y) \geq C, \\
X \in Pos(\mathcal{X}). & Y \in Pos(\mathcal{Y}).
\end{matrix}
\end{equation}
SDP forms have interesting and ubiquitous applications in quantum information theory \cite{watrous2009}; for example, it was recently shown by Brandao and Svore \cite{brandao2016quantum} that there exists a quantum algorithm for solving semidefinite programs that gives an unconditional square-root speedup over any existing classical method.\\

\section{Asymmetry weight}\label{sec:asym}
The weight-based quantifier in the resource theory of asymmetry which we call the \textit{asymmetry weight} is given in an operationally motivated way and will be proved to satisfy all the conditions that a proper asymmetry measure needs to fulfill. We also give the corresponding SDP form and show how the asymmetry weight can be viewed as a state-dependent asymmetry witness.\\

\begin{figure}
\includegraphics[scale=1]{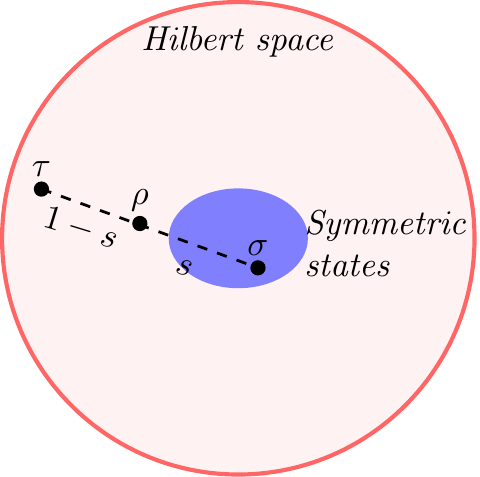}
%\begin{tikzpicture}[scale=1]
%\filldraw[color=red!60, fill=red!5, very thick](1,-1) circle (2.4);
%\filldraw[blue!50] (1,-1) ellipse (0.7 and 0.5);
%\draw[dashed, thick] (-1,-0.5) -- (1.2,-1.3);
%\filldraw[black] (1.2,-1.3) circle (2pt) node[anchor=south] {$\sigma$};
%\filldraw[black] (-1,-0.5) circle (2pt) node[anchor=south] {$\tau$};
%\filldraw[black] (0,-0.85) circle (2pt) node[anchor=south] {$\rho$};
%\node at (1,1) {\textit{Hilbert space}};
%\node[rotate=-17] at (-0.6,-0.9) {$1-s$};
%\node[rotate=-17] at (0.45,-1.3) {$s$};
%\node[align=left, below] at (2.4,-0.6) {\textit{Symmetric}\\ \textit{states}};
%\end{tikzpicture}
\caption{(Color online) The set of symmetric states $\mathcal{J}$ (shown in blue) forms a subspace of the total Hilbert space $\mathcal{H}$ (shown in red). The asymmetry weight of a quantum state $\rho$ is then defined as the minimum weight convex mixture of $\sigma$ and $\tau$ where $\sigma \in \mathcal{J}$ and $\tau \in \cD(\cH)$.}
\label{fig:schematic}
\end{figure}

\noindent
{\bf Definition 1.} In the process of preparing some given quantum state $\rho$, we want to use the least number of asymmetry resources --- which means that we would like to use the symmetric states as much as possible and the asymmetric states as little as possible --- such that we generate the given state $\rho$ on an average. That is, given a quantum state $\rho$, the asymmetry weight of $\rho$ is defined as (see also Figure \ref{fig:schematic})
\begin{eqnarray}\label{def1:asy_w}
 \cA_w(\rho)=\min_{\{\sigma,\tau\}}\left\{s\geq0:\rho=(1-s)\sigma+s\tau,  \sigma\in \mathcal{J}, \right.\nonumber\\
 \left.\tau\in\cD(\cH)\right\}.
\end{eqnarray}
The asymmetry weight, defined as above, has some nice properties such as convexity and monotonicity under covariant operations, which we will prove in the following.

\begin{prop}
Given a quantum state $\rho\in\cD(\cH)$, the asymmetry weight $\cA_w(\rho)$ is bounded as
$0\leq \cA_w(\rho)\leq 1$, and $\cA_w(\rho)=0$ iff $\rho\in \mathcal{J}$, i.e., iff $\rho$ is symmetric.
\end{prop}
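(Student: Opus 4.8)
The plan is to treat the three assertions---the two bounds and the two directions of the characterization---separately, since each reduces to exhibiting (or ruling out) a feasible decomposition in the sense of Definition 1. The lower bound $\cA_w(\rho)\geq 0$ is immediate, as the constraint $s\geq 0$ is built into the feasible set over which we minimize. For the upper bound I would exhibit an explicit feasible point with $s=1$: taking $\tau=\rho$ (a legitimate density operator) and letting $\sigma$ be \emph{any} symmetric state---for instance the group average $\cG(\rho)$ or the maximally mixed state $\I/d$, both of which lie in $\mathcal{J}$---gives $\rho=(1-1)\sigma+1\cdot\tau=\rho$. This satisfies all the constraints, so $s=1$ is feasible and hence $\cA_w(\rho)\leq 1$. (One should first note that $\mathcal{J}$ is nonempty, which is clear since $\cG(\rho)$ is invariant under every $\mathbf{U}_g$, and indeed $\I/d$ is fixed by any unitary conjugation.)

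For the easy direction of the characterization, namely $\rho\in\mathcal{J}\Rightarrow\cA_w(\rho)=0$, I would simply set $\sigma=\rho$ and $s=0$, letting $\tau$ be an arbitrary density operator; then $\rho=(1-0)\sigma+0\cdot\tau=\rho$ with $\sigma\in\mathcal{J}$, so $s=0$ is feasible. Combined with the lower bound $\cA_w(\rho)\geq 0$ already established, this forces $\cA_w(\rho)=0$.

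The converse, $\cA_w(\rho)=0\Rightarrow\rho\in\mathcal{J}$, is where the only genuine care is needed, because one must guarantee that the vanishing value is actually \emph{attained} rather than merely approached. I would handle this by a compactness argument. Since $\cA_w(\rho)=0$, there is a sequence of feasible decompositions $\rho=(1-s_n)\sigma_n+s_n\tau_n$ with $\sigma_n\in\mathcal{J}$, $\tau_n\in\cD(\cH)$, and $s_n\to 0$. Both $\mathcal{J}$ and $\cD(\cH)$ are closed and bounded, hence compact in finite dimension, so after passing to a subsequence we may assume $\sigma_n\to\sigma\in\mathcal{J}$ and $\tau_n\to\tau\in\cD(\cH)$. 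Passing to the limit in the decomposition yields $\rho=\sigma\in\mathcal{J}$, as desired. If instead one reads the ``$\min$'' in Definition 1 at face value, the same compactness of the constraint set shows the minimum is attained, and a minimizer at $s=0$ then gives $\rho=\sigma\in\mathcal{J}$ directly. The main---and essentially only---obstacle is thus this attainment point; everything else amounts to writing down the correct feasible states.
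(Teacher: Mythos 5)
Your proposal is correct and follows the same route as the paper, which simply states that the result ``follows directly from the definition''; you have merely written out the feasible decompositions that the authors leave implicit. The one genuine addition is your compactness argument guaranteeing that the infimum at $s=0$ is attained in the direction $\cA_w(\rho)=0\Rightarrow\rho\in\mathcal{J}$ --- a point the paper glosses over but which is needed for a fully rigorous proof, and which your argument (using closedness of $\mathcal{J}$ and compactness of $\cD(\cH)$ in finite dimension) handles correctly.
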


\begin{proof}
The proof follows directly from the definition of asymmetry weight.
\end{proof}

\begin{prop}
The asymmetry weight $\cA_w$ is convex in $\rho$, i.e.,
\begin{eqnarray*}
\cA_w(p\rho_1+(1-p)\rho_2)\leq p \cA_w(\rho_1)+(1-p)\cA_w(\rho_2),
\end{eqnarray*}
where $p\in [0,1]$ and $\rho_1,\rho_2\in \cD(\cH)$.
\end{prop}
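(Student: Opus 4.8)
The plan is to take optimal weight decompositions of $\rho_1$ and $\rho_2$ and glue them together so that the symmetric parts combine into a single symmetric state and the resourceful parts combine into a single density operator, with the total weight being exactly the convex combination of the two individual weights. Since $\cA_w$ is defined as a minimization over admissible decompositions, exhibiting one such decomposition for $p\rho_1+(1-p)\rho_2$ of weight $p\cA_w(\rho_1)+(1-p)\cA_w(\rho_2)$ immediately yields the desired inequality.

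Concretely, write $s_i=\cA_w(\rho_i)$ and fix optimal decompositions $\rho_i=(1-s_i)\sigma_i+s_i\tau_i$ with $\sigma_i\in\mathcal{J}$ and $\tau_i\in\cD(\cH)$ for $i=1,2$. Setting $s=p s_1+(1-p)s_2$, I would use the key identity $1-s=p(1-s_1)+(1-p)(1-s_2)$. Forming $p\rho_1+(1-p)\rho_2$ and regrouping the symmetric and resourceful contributions, I define
\[
\sigma=\frac{p(1-s_1)\sigma_1+(1-p)(1-s_2)\sigma_2}{1-s}, \qquad \tau=\frac{p s_1\tau_1+(1-p)s_2\tau_2}{s},
\]
so that $p\rho_1+(1-p)\rho_2=(1-s)\sigma+s\tau$ by construction.

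It then remains to check that $\sigma$ and $\tau$ are admissible. Both are manifestly convex combinations (the numerator coefficients are nonnegative and sum to the respective denominator), so $\tau\in\cD(\cH)$ because $\cD(\cH)$ is convex, and $\sigma\in\mathcal{J}$ because $\mathcal{J}$ is convex --- the latter being the intersection of the convex set $\cD(\cH)$ with the linear constraint $\cG(\sigma)=\sigma$. Hence this is a valid decomposition of $p\rho_1+(1-p)\rho_2$ of weight $s$, and minimality of $\cA_w$ gives the claim.

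The only genuine subtlety --- and the step I would treat most carefully --- is the degenerate cases where a denominator vanishes. If $s=0$ then (for $p\in(0,1)$) $s_1=s_2=0$, so $\rho_1,\rho_2\in\mathcal{J}$, their mixture is symmetric, and $\cA_w(p\rho_1+(1-p)\rho_2)=0$ saturates the bound. If $s=1$ the symmetric part carries zero weight and one simply drops the $\sigma$ term, keeping $\tau$ as above; the extreme values $p\in\{0,1\}$ are trivial. These cases aside, the argument is a routine ``mix the decompositions'' convexity proof and presents no real obstacle.
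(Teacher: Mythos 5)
Your proof is correct and follows essentially the same route as the paper's: mix the two optimal decompositions, regroup the symmetric and resourceful parts with weight $s=p\cA_w(\rho_1)+(1-p)\cA_w(\rho_2)$, and invoke convexity of $\mathcal{J}$ and $\cD(\cH)$ together with minimality. Your explicit treatment of the degenerate denominators is a small refinement the paper omits, but the argument is the same.
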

\begin{proof}
Let $\rho_1=(1-\cA_w(\rho_1))\sigma^*_1+\cA_w(\rho_1)\tau^*_1$ and $\rho_2=(1-\cA_w(\rho_2))\sigma^*_2+\cA_w(\rho_2)\tau^*_2$ be the optimal decompositions for $\rho_1$ and $\rho_2$, where $\sigma^*_1,\sigma^*_2\in\mathcal{J}$ and $\tau^*_1, \tau^*_2\in\cD(\cH)$.
Let
\begin{eqnarray*}
\sigma&=&\frac{1}{1-s}[p(1-\cA_w(\rho_1))\sigma^*_1+(1-p)(1-\cA_w(\rho_2))\sigma^*_2],\\
\tau&=&\frac{1}{s}[p\cA_w(\rho_1)\sigma^*_1+(1-p)\cA_w(\rho_2)\sigma^*_2],\\
s&=&p\cA_w(\rho_1)+(1-p)\cA_w(\rho_2).
\end{eqnarray*}
Then
\begin{eqnarray}
p\rho_1+(1-p)\rho_2=(1-s)\sigma+s\tau,
\end{eqnarray}
which implies that $\cA_w(p\rho_1+(1-p)\rho_2)\leq s=p\cA_w(\rho_1)+(1-p)\cA_w(\rho_2) $.
\end{proof}

\begin{prop}
Let $\Phi=\sum_i\Phi_i$ be a selective $\sf G$-covariant  quantum operation; i.e.,
for any $i$, $[\Phi_i, \mathbf{U}_g]=0,\forall g\in \sf G$. Then, the asymmetry weight is monotonically nonincreasing on an average:
\begin{eqnarray}
\cA_w(\rho)\geq \sum_i p_i \cA_w(\rho_i),
\end{eqnarray}
where $p_i:=\mathrm{Tr}[\Phi_i(\rho)]$ and $\rho_i=\frac{\Phi_i(\rho)}{p_i}$.
\end{prop}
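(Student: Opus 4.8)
The plan is to start from an optimal weight decomposition of $\rho$ and push it through each suboperation $\Phi_i$, exploiting covariance to guarantee that the symmetric part stays symmetric. Concretely, first I would fix an optimal pair $(\sigma,\tau)$ achieving the minimum, so that $\rho=(1-s)\sigma+s\tau$ with $s=\cA_w(\rho)$, $\sigma\in\mathcal{J}$ and $\tau\in\cD(\cH)$. Applying the (completely positive) map $\Phi_i$ to both sides and using linearity gives
\begin{eqnarray*}
p_i\rho_i=\Phi_i(\rho)=(1-s)\Phi_i(\sigma)+s\Phi_i(\tau).
\end{eqnarray*}

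The key observation is that covariance $[\Phi_i,\mathbf{U}_g]=0$ forces $\Phi_i$ to send symmetric operators to symmetric operators: since $\mathbf{U}_g(\Phi_i(\sigma))=\Phi_i(\mathbf{U}_g(\sigma))=\Phi_i(\sigma)$ for all $g$, the positive operator $\Phi_i(\sigma)$ is invariant under the representation, so upon normalization $\sigma_i:=\Phi_i(\sigma)/q_i$ lies in $\mathcal{J}$ whenever $q_i:=\mathrm{Tr}[\Phi_i(\sigma)]>0$. Writing also $r_i:=\mathrm{Tr}[\Phi_i(\tau)]$ and $\tau_i:=\Phi_i(\tau)/r_i\in\cD(\cH)$, taking the trace of the displayed identity yields $p_i=(1-s)q_i+s r_i$, and dividing that identity by $p_i$ exhibits
\begin{eqnarray*}
\rho_i=\frac{(1-s)q_i}{p_i}\,\sigma_i+\frac{s r_i}{p_i}\,\tau_i
\end{eqnarray*}
as an admissible decomposition of $\rho_i$ into a symmetric state and an arbitrary state. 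By the definition of the weight this gives the per-outcome bound $\cA_w(\rho_i)\leq s r_i/p_i$. Multiplying by $p_i$ and summing then closes the argument:
\begin{eqnarray*}
\sum_i p_i\,\cA_w(\rho_i)\leq s\sum_i r_i=s\,\mathrm{Tr}[\Phi(\tau)]=s=\cA_w(\rho),
\end{eqnarray*}
where I use that $\Phi=\sum_i\Phi_i$ is trace preserving on $\tau$ (trace nonincreasing, $\sum_i r_i\leq1$, would already suffice).

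The main obstacle I anticipate is bookkeeping the degenerate branches where $q_i=0$ or $r_i=0$, since there $\sigma_i$ or $\tau_i$ is undefined; these must be disposed of so that the bound $\cA_w(\rho_i)\leq s r_i/p_i$ is valid for every $i$ before summing. I would handle them directly: if $r_i=0$ then $\Phi_i(\tau)=0$, so $\rho_i\propto\Phi_i(\sigma)\in\mathcal{J}$ and $\cA_w(\rho_i)=0=s r_i/p_i$; if $q_i=0$ then $\Phi_i(\sigma)=0$, so $\rho_i=\tau_i$ and, since then $p_i=s r_i$, the trivial bound $\cA_w(\rho_i)\leq1=s r_i/p_i$ still holds. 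With these cases settled, the summation step goes through verbatim and the claimed average monotonicity follows.
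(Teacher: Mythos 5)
Your proof is correct and follows essentially the same route as the paper's: apply $\Phi_i$ to the optimal weight decomposition, use covariance to keep the symmetric part in $\mathcal{J}$, bound each $\cA_w(\rho_i)$ by $s\,\mathrm{Tr}[\Phi_i(\tau)]/p_i$, and sum using trace preservation. Your explicit verification that covariance preserves $\mathcal{J}$ and your treatment of the degenerate branches $q_i=0$ or $r_i=0$ are slightly more careful than the paper's write-up, but the argument is the same.
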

\begin{proof}
Let $\rho=(1-\cA_w(\rho))\sigma^*+\cA_w(\rho)\tau^*$ be the optimal decomposition, where $\sigma^*\in\mathcal{J}$ and $\tau^*\in\cD(\cH)$. Then
\begin{eqnarray*}
\Phi_i(\rho)=(1-\cA_w(\rho))\Phi_i(\sigma^*)+
\cA_w(\rho)\Phi(\tau^*).
\end{eqnarray*}
Let
\begin{eqnarray*}
\sigma_i&=&\frac{1}{(1-s_i)p_{i}}(1-\cA_w(\rho))\Phi_i(\sigma^*),\\
\tau_i&=&\frac{1}{s_i p_{i}}\cA_w(\rho)\Phi_i(\tau^*),\\
s_i&=&\frac{1}{p_i}\cA_w(\rho)\Tr{\Phi_i(\tau^*)},
\end{eqnarray*}
then $\rho_i=(1-s_i)\sigma_i+s_i\tau_i$. As $\Phi_i(\sigma^*)\in \mathcal{J}$, then $\cA_w(\rho_i)\leq s_i$. Therefore, $\sum_i p_i \cA_w(\hat{\rho}_i) \leq\sum_i p_i s_i =\cA_w(\rho)$. This concludes the proof of the proposition.
\end{proof}
Using the above three Propositions, we have shown that the asymmetry weight is a proper asymmetry measure. In the following, we express the asymmetry weight in terms of semidefinite programs and explore its connection to asymmetry and entanglement witnesses.\\

\subsection{Asymmetry weight as a semidefinite program}
A decomposition of a given state $\rho=(1-s)\sigma+s\tau$ is equivalent to the condition
$\rho\geq (1-s)\sigma$, where $\sigma \in\mathcal{J}$, as there exists a quantum state $\tau \in \cD(\cH)$ such that $\rho-(1-s)\sigma=s\tau$ if $\rho\geq (1-s)\sigma$. Then, the asymmetry weight can also be characterized as
 \begin{eqnarray}\label{def2:asy_w}
  \cA_w(\rho)=\min_{\sigma\in \mathcal{J}}\Set{s\geq0:\rho\geq(1-s)\sigma}.
 \end{eqnarray}
In view of the formula \eqref{def2:asy_w}, the SDP form of asymmetry can be obtained as follows.

\begin{thm}\label{thm:A_w}
For a given state $\rho\in\cD(\cH)$, the asymmetry weight $\cA_w(\rho)$ can be characterized as the solution of the following optimization problem:

\begin{eqnarray}\label{eq:AW_f}
\max \Tr{\rho W},\\\nonumber
\text{such that}~~ \mathcal{G}(W)\leq 0,\\\nonumber
W\leq \mathbb{I},\nonumber
\end{eqnarray}
where operator  $W$ is Hermitian.
\end{thm}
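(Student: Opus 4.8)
The plan is to linearize the definition \eqref{def2:asy_w} into a standard semidefinite program and to recognize the stated optimization as its Lagrange dual. First I would substitute $X=(1-s)\sigma$. Since $\sigma\in\mathcal{J}$ is a normalized density operator, the membership $\sigma\in\mathcal{J}$ is equivalent to the pair of conditions $X\geq0$ and $\mathcal{G}(X)=X$ (a nonzero such $X$ recovers $\sigma=X/\Tr{X}$), while $\Tr{X}=1-s$ gives $s=1-\Tr{X}$ and $\rho\geq(1-s)\sigma$ becomes $X\leq\rho$. Because $X\leq\rho$ forces $\Tr{X}\leq\Tr{\rho}=1$, the constraint $s\geq0$ is automatic, and minimizing $s$ is the same as maximizing $\Tr{X}$. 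This yields
\begin{equation}
\cA_w(\rho)=1-\max\set{\Tr{X}:0\leq X\leq\rho,\ \mathcal{G}(X)=X}.
\end{equation}

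Next I would dualize the inner maximization. Writing the objective as $\Tr{X}=\langle\I,X\rangle$, introducing a positive semidefinite multiplier $Y$ for $X\leq\rho$ and a \emph{free} Hermitian multiplier $Z$ for the linear equality $\mathcal{G}(X)=X$, and using that $\mathcal{G}$ is self-adjoint in the Hilbert--Schmidt inner product (so $\Tr{Z\mathcal{G}(X)}=\Tr{\mathcal{G}(Z)X}$), the Lagrangian gathers the $X$-dependent part into $\Tr{X(\I-Y+\mathcal{G}(Z)-Z)}$. Finiteness of the supremum over $X\geq0$ then forces $Y\geq\I+\mathcal{G}(Z)-Z$, and the dual value is $\min\Tr{\rho Y}$ over such $Y,Z$. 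Substituting $W=\I-Y$ and using $\Tr{\rho}=1$ turns $1-\Tr{\rho Y}$ into $\Tr{\rho W}$, while $Y\geq0$ becomes $W\leq\I$ and $Y\geq\I+\mathcal{G}(Z)-Z$ becomes $W\leq Z-\mathcal{G}(Z)$. Hence $\cA_w(\rho)$ equals the maximum of $\Tr{\rho W}$ over Hermitian $W\leq\I$ for which there exists a Hermitian $Z$ with $W\leq Z-\mathcal{G}(Z)$.

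The crucial simplification is the elimination of $Z$, for which I would prove the following: for Hermitian $W$, there exists a Hermitian $Z$ with $W\leq Z-\mathcal{G}(Z)$ if and only if $\mathcal{G}(W)\leq0$. For the forward direction, apply $\mathcal{G}$ to the inequality; since $\mathcal{G}$ is a convex average of the conjugations $X\mapsto U_gXU_g^\dagger$ it preserves positivity, and the idempotency $\mathcal{G}^2=\mathcal{G}$ gives $\mathcal{G}(Z-\mathcal{G}(Z))=0$, so $\mathcal{G}(W)\leq0$. For the converse, the choice $Z=W$ already works, as $Z-\mathcal{G}(Z)-W=-\mathcal{G}(W)\geq0$. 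Substituting this equivalence into the previous paragraph collapses the two constraints on $W$ precisely to $W\leq\I$ and $\mathcal{G}(W)\leq0$, which is the claimed program \eqref{eq:AW_f}.

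The step I expect to require the most care is the justification of strong duality, i.e., that the primal and dual optimal values actually coincide. Because the equality $\mathcal{G}(X)=X$ confines $X$ to the range of the projection $\mathcal{G}$, the maximization over $X$ cannot be strictly feasible, so the usual Slater argument must be invoked on the other side. I would therefore check that the $X$-problem is feasible (take $X=0$) and that the $W$-problem is \emph{strictly} feasible --- for instance $W=-c\,\I$ with $c>0$ satisfies $W<\I$ and $\mathcal{G}(W)=-c\,\I<0$ strictly --- and then apply the Slater-type strong duality theorem for semidefinite programs to conclude that the optimal values agree and are attained. As a sanity check, for $\rho\in\mathcal{J}$ one has $\Tr{\rho W}=\Tr{\rho\mathcal{G}(W)}\leq0$ with equality at $W=0$, recovering $\cA_w(\rho)=0$.
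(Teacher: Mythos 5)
Your proof is correct and reaches the paper's program by the same overall strategy --- linearize the weight into a primal SDP in $X=(1-s)\sigma$, pass to the Lagrange dual, substitute $W=\I-Y$, and invoke a Slater condition on the dual side --- but the two arguments handle the group-averaging constraint differently. The paper first replaces the equality-constrained primal $\max\{\Tr{X}: 0\le X\le\rho,\ \mathcal{G}(X)=X\}$ by the relaxation $\max\{\Tr{X}: X\ge 0,\ \mathcal{G}(X)\le\rho\}$ and shows the two values coincide (feasible points map into one another via $X\mapsto\mathcal{G}(X)$, using $\mathcal{G}^2=\mathcal{G}$ and trace preservation), so that the program it dualizes has no equality constraint and its dual carries no extra variables. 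You instead dualize the equality constraint directly with a free Hermitian multiplier $Z$ and then eliminate it via the lemma that $W\le Z-\mathcal{G}(Z)$ for some Hermitian $Z$ iff $\mathcal{G}(W)\le 0$ (using $\mathcal{G}^2=\mathcal{G}$ and positivity preservation of $\mathcal{G}$). These are two faces of the same structural facts about the projection $\mathcal{G}$, so neither route is more general; yours is more self-contained on the duality side since the dual is derived from the Lagrangian rather than read off a standard form, while the paper's primal-side manipulation keeps the dual program smaller from the outset. Your explicit observation that strict feasibility fails on the $X$-side because of the equality constraint, so Slater must be checked for the $W$-program, is a point the paper makes only implicitly --- its relaxed primal also need not be strictly feasible when $\rho$ is rank-deficient, so the dual-side check (its $X=\alpha\I$, your $W=-c\,\I$) is the operative one in both arguments.
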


\begin{proof}
Since $\cA_w(\rho)$ can be written as
\begin{eqnarray*}
\cA_w(\rho)=\min_{\sigma\in \mathcal{J}}\set{s\geq 0:\rho\geq (1-s)\sigma},
\end{eqnarray*}
$1-\cA_w(\rho)$ can be obtained as the solution of the following semidefinite program:
\begin{eqnarray}\label{eq:SDP1}
\max \Tr{\widetilde{\sigma}},\\\nonumber
\text{such that}~~\widetilde{\sigma}\leq \rho,\\\nonumber
\mathcal{G}(\widetilde{\sigma})=\widetilde{\sigma};\\\nonumber
\widetilde{\sigma}\geq 0,
\end{eqnarray}
where $\widetilde{\sigma} = (1-s)\sigma$. Let us consider the following semidefinite program:
\begin{eqnarray}\label{eq:SDP2}
\max \Tr{\widetilde{\sigma}},\\\nonumber
\text{such that}~~\mathcal{G}(\widetilde{\sigma})\leq \rho,\\\nonumber
\widetilde{\sigma}\geq 0.
\end{eqnarray}
Since $\widetilde{\sigma}\leq \rho$ and $\mathcal{G}(\widetilde{\sigma})=\widetilde{\sigma}$,
then $\mathcal{G}(\widetilde{\sigma})\leq \rho$, which implies that the solution of \eqref{eq:SDP2} is an upper bound of \eqref{eq:SDP1}. Additionally, as $\mathcal{G}^2(\widetilde{\sigma})=\mathcal{G}(\widetilde{\sigma})$ and $\Tr{\mathcal{G}(\widetilde{\sigma})}=\Tr{\widetilde{\sigma}}$, we have that
\eqref{eq:SDP1} is an upper bound of \eqref{eq:SDP2}. Thus, \eqref{eq:SDP1} and \eqref{eq:SDP2}  are equivalent, that is, $\mathcal{A}_w(\rho)$ can be characterized as the solution of  \eqref{eq:SDP2}.

The standard form of  \eqref{eq:SDP2}  is
\begin{eqnarray*}
\max \Tr{B\widetilde{\sigma}},\\
\text{such that}~~ \Lambda^\dag(\widetilde{\sigma})\leq C,\\
\widetilde{\sigma}\geq 0,
\end{eqnarray*}
with $B=\mathbb{I}, \Lambda^\dag(\widetilde{\sigma})=\mathcal{G}(\widetilde{\sigma})$, and $
C=\rho$. The dual semidefinite program is
\begin{eqnarray*}
\min \Tr{CX},\\
\text{such that}~~ \Lambda(X)\geq B,\\
X\geq 0.
\end{eqnarray*}
That is,
\begin{eqnarray*}
\min \Tr{\rho X},\\
\text{such that}~~ \mathcal{G}(X)\geq \mathbb{I},\\
X\geq 0.
\end{eqnarray*}
Note that the dual is strictly feasible as we only need to choose $X=\alpha \mathbb{I}$ for a large enough $\alpha$. Thus, strong duality holds, which implies that $\cA_w$ can be viewed as the solution of the
following semidefinite program:
\begin{eqnarray*}
\max 1-\Tr{\rho X},\\
\text{such that}~~ \mathcal{G}(X)\geq \mathbb{I},\\
X\geq 0.
\end{eqnarray*}
Take $W=\mathbb{I}-X$, then
\begin{eqnarray*}
\max \Tr{\rho W},\\
\text{such that}~~ \mathcal{G}(W)\leq 0,\\
W\leq\mathbb{ I}.
\end{eqnarray*}
This concludes the proof of the theorem.
\end{proof}

\subsection{Asymmetry witness as an entanglement witness}
For any Hermitian operator $W$ with $\mathcal{G}(W)\leq 0$ and for any state $\sigma\in \mathcal{J}$, we have $\Tr{\sigma W}\leq 0$. Any such $W$ can be viewed as an asymmetry witness (similar to the idea of asymmetry witness defined in Ref. \cite{Piani2016} up to a signal), as for any state $\rho \in \cD(\cH)$ with $\Tr{\rho W}>0$ implies that  $\rho$ is asymmetric. In quantum entanglement theory, entanglement witnesses have been introduced to detect entanglement \cite{Horodecki1996,Terhal2000}. For example, the swap entanglement witness $V=\sum_{ij}\ket{ij}\bra{ji}$ has been used to indicate the existence of  entanglement where $\Tr{\rho V}<0$ \cite{Werner1989,Horodecki1996}. Since the group $\set{\mathbb{I}, V}$ is a unitary representation of the symmetry group $S_2$ on the the Hilbert space $\complex^d\ot\complex^d$, let us take $W=-V$, which implies that $\cG(W)=-2V\leq 0$. That is, the swap entanglement witness may be regarded as a special asymmetry witness. This goes on to accentuate the interplay between nonclassicality like coherence and squeezing and
quantum correlations like discord \cite{discord2000zurek,discord2001zurek,discord2001vedral} and entanglement \cite{Killoran2016}. A recent linear optics experiment took an important step in this direction where coherence in a local system was consumed to synthesize an identical amount of quantum discord with an ancilla system using only incoherent operations \cite{Wu2017}.\\

Additionally, given a quantum state $\rho$, there exists the optimal choice $W^*_\rho$ which depends on $\rho$, such that $\cA_w=\Tr{\rho W^*_\rho}$. That is, $\cA_w$ can be viewed as a quantitative asymmetry witness (which is state dependent). In fact, we find that many other asymmetry measures can be regarded as quantitative asymmetry witnesses, for example, the relative entropy of asymmetry $\cA_r(\rho)=S(\rho||\mathcal{G}(\rho))= \Tr{\rho W^s_\rho}$, where $W^s_\rho=\log\rho-\log \mathcal{G}(\rho)$ is an asymmetry witness as $\mathcal{G}(W^s_\rho)\leq 0$. Also, the robustness of asymmetry can be expressed as $\cA_R(\rho)=-\Tr{\rho W^r_\rho}$, where $\mathcal{G}(W^r_\rho)=0$ (see Ref. \cite{Piani2016}).\\

Furthermore, in view of the fact that the asymmetry weight can be expressed as an asymmetry witness, we can get a lower bound on the asymmetry weight by using the Hilbert-Schmidt distance between $\rho$ and $\cG(\rho)$.

\begin{prop}
\label{prop-asym-weight-dis}
For any given  state $\rho\in \cD(\cH)$, we have
\begin{eqnarray}
\cA_w(\rho)\geq
\frac{\norm{\rho-\mathcal{G}(\rho)}^2_2}{\norm{\rho}_\infty}
\geq \norm{\rho-\mathcal{G}(\rho)}^2_2,
\end{eqnarray}
where $||A||_{2}^{2}:=\Tr{A^{\dagger}A}$ is the Hilbert-Schmidt norm and $||A||_{\infty}:=\max_{i} \lambda_i$ with $\lambda_{i}$ being the $i$th eigenvalue of $\sqrt{A^{\dagger}A}$.
\end{prop}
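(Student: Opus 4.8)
The plan is to exploit the SDP characterization from Theorem~\ref{thm:A_w}, which expresses $\cA_w(\rho)$ as the maximum of $\Tr{\rho W}$ over Hermitian $W$ subject to $\cG(W)\leq 0$ and $W\leq\I$. Because this is a maximization, every \emph{feasible} $W$ furnishes a lower bound $\cA_w(\rho)\geq\Tr{\rho W}$, so it suffices to exhibit one good choice. The natural candidate that will reproduce the Hilbert--Schmidt distance in the objective is a multiple of the ``asymmetric part'' $\rho-\cG(\rho)$; concretely, I would set $W=\frac{1}{\norm{\rho}_\infty}\pa{\rho-\cG(\rho)}$ and verify that it lies in the feasible set.

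Feasibility divides into the two constraints. For $\cG(W)\leq 0$ I would invoke the idempotency $\cG^2=\cG$ used in the proof of Theorem~\ref{thm:A_w}, which gives $\cG\pa{\rho-\cG(\rho)}=\cG(\rho)-\cG(\rho)=0$, so that $\cG(W)=0\leq 0$. For $W\leq\I$ the key observation is that the group average $\cG(\rho)$ is itself a density operator and hence $\cG(\rho)\geq 0$; therefore $\rho-\cG(\rho)\leq\rho\leq\norm{\rho}_\infty\I$, and dividing through by $\norm{\rho}_\infty$ yields $W\leq\I$.

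To evaluate the objective I would write $\Tr{\rho W}=\frac{1}{\norm{\rho}_\infty}\pa{\Tr{\rho^2}-\Tr{\rho\,\cG(\rho)}}$ and then simplify the cross term. Here I would use that $\cG$ is self-adjoint with respect to the Hilbert--Schmidt inner product (which follows from $U_{g^{-1}}=U_g^\dag$ and the fact that the average runs over the whole group), together with idempotency, to obtain $\Tr{\rho\,\cG(\rho)}=\Tr{\cG(\rho)^2}$. Consequently $\Tr{\rho^2}-\Tr{\rho\,\cG(\rho)}=\Tr{\rho^2}-\Tr{\cG(\rho)^2}=\norm{\rho-\cG(\rho)}_2^2$, which delivers the first inequality $\cA_w(\rho)\geq\norm{\rho-\cG(\rho)}_2^2/\norm{\rho}_\infty$. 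The second inequality is then immediate, since $\norm{\rho}_\infty\leq 1$ for any density operator forces $1/\norm{\rho}_\infty\geq 1$.

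I expect the only genuinely delicate point to be the feasibility check $W\leq\I$: it hinges on the nonnegativity of $\cG(\rho)$ to justify $\rho-\cG(\rho)\leq\rho$, and thereby on bounding the top eigenvalue of the asymmetric part by $\norm{\rho}_\infty$ rather than by the potentially larger $\norm{\rho-\cG(\rho)}_\infty$. The algebraic reduction of the objective, while routine, likewise depends on correctly combining both self-adjointness and idempotency of the twirling map $\cG$.
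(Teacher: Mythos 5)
Your proposal is correct and follows essentially the same route as the paper: the same feasible witness $W=\frac{\rho-\mathcal{G}(\rho)}{\norm{\rho}_\infty}$, the same feasibility checks $\mathcal{G}(W)=0$ and $W\leq\frac{\rho}{\norm{\rho}_\infty}\leq\mathbb{I}$, and the same identity $\Tr{\rho\,\mathcal{G}(\rho)}=\Tr{\mathcal{G}(\rho)^2}$ to reduce the objective to $\norm{\rho-\mathcal{G}(\rho)}_2^2/\norm{\rho}_\infty$. The paper states these steps more tersely, but the argument is identical.
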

\begin{proof}
Let $W=\frac{\rho-\mathcal{G}(\rho)}{\norm{\rho}_\infty}$, then $\mathcal{G}(W)=0$ and $
W\leq\frac{\rho}{\norm{\rho}_\infty}\leq \mathbb{I}$.
Thus,
\begin{eqnarray*}
\cA_w(\rho)&\geq& \Tr{\rho W}\\
&=& \frac{\Tr{\rho(\rho-\mathcal{G}(\rho))}}{\norm{\rho}_\infty}\\
&=& \frac{\Tr{\rho^2}-\Tr{\mathcal{G}(\rho)^2}}{\norm{\rho}_\infty}\\
&=&\frac{\norm{\rho-\mathcal{G}(\rho)}^2_2}{\norm{\rho}_\infty}\\
&\geq&\norm{\rho-\mathcal{G}(\rho)}^2_2.
\end{eqnarray*}
Here, we use the fact that
$\Tr{\rho\mathcal{G}(\rho)}=\Tr{\mathcal{G}(\rho)^2}$.
\end{proof}

The distance between the  state $\rho$ and $\cG(\rho)$ can be used to quantify how asymmetric the state $\rho$ is, as the state $\rho$ is symmetric iff $\rho=\cG(\rho)$. The above proposition facilitates a connection between the asymmetry weight and the Hilbert-Schmidt distance. In the following, we find the relationship between the asymmetry weight and other asymmetry measures.

\begin{prop}\label{prop:Aw_vs_A_R}
Given a quantum state $\rho\in\cD(\cH)$, we have
\begin{eqnarray}
\cA_w(\rho)&\geq& \frac{1}{d-1}\cA_R(\rho),~ \mathrm{and}\\
\cA_w(\rho)&\geq& \frac{1}{\ln d}\cA_r(\rho).
\end{eqnarray}

\end{prop}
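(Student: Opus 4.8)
The plan is to derive both inequalities from a single optimal weight decomposition $\rho=(1-s)\sigma+s\tau$, with $s=\cA_w(\rho)$, $\sigma\in\mathcal{J}$, and $\tau\in\cD(\cH)$, by reshaping it into a certificate for an upper bound on $\cA_R(\rho)$ and on $\cA_r(\rho)$. The whole argument is driven by this one decomposition; the two cases differ only in how the residual state $\tau$ is tamed. Throughout I assume $s>0$, since $s=0$ means $\rho\in\mathcal{J}$ and both right-hand sides vanish.

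For $\cA_w(\rho)\geq\frac{1}{d-1}\cA_R(\rho)$, I would first use that $\tau$ is a density operator, so $\tau\leq\mathbb{I}$ and hence $\rho\leq(1-s)\sigma+s\mathbb{I}$. The key observation is that the maximally mixed state $\mathbb{I}/d$ is symmetric, so after normalization the right-hand side stays in $\mathcal{J}$: the operator $\sigma'=[(1-s)\sigma+s\mathbb{I}]/[(1-s)+sd]$ is a convex combination of $\sigma\in\mathcal{J}$ and $\mathbb{I}/d\in\mathcal{J}$, and is therefore in $\mathcal{J}$. Choosing $s'=(d-1)s$ makes $(1+s')\sigma'=(1-s)\sigma+s\mathbb{I}\geq\rho$, which is exactly the condition $(1+s')\sigma'\geq\rho$ that certifies $\cA_R(\rho)\leq s'$ (the witnessing state being $\tau'=[(1+s')\sigma'-\rho]/s'$). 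The one place to be careful is the normalization constant: the factor $d-1$ is forced by $\Tr{(1-s)\sigma+s\mathbb{I}}=(1-s)+sd=1+(d-1)s$.

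For $\cA_w(\rho)\geq\frac{1}{\ln d}\cA_r(\rho)$, I would use that $\mathcal{G}$ is linear and fixes $\sigma$, so $\mathcal{G}(\rho)=(1-s)\sigma+s\mathcal{G}(\tau)$ has the same symmetric part as $\rho$. Feeding these two aligned decompositions into the joint convexity of the quantum relative entropy yields $\cA_r(\rho)=S(\rho||\mathcal{G}(\rho))\leq(1-s)S(\sigma||\sigma)+sS(\tau||\mathcal{G}(\tau))=s\,\cA_r(\tau)$, because $S(\sigma||\sigma)=0$. It then suffices to bound $\cA_r(\tau)=S(\mathcal{G}(\tau))-S(\tau)$ uniformly, and the dimension bounds $S(\mathcal{G}(\tau))\leq\ln d$ together with $S(\tau)\geq0$ give $\cA_r(\tau)\leq\ln d$, so that $\cA_r(\rho)\leq(\ln d)\,s=(\ln d)\,\cA_w(\rho)$.

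I expect the only real obstacle to be selecting the correct auxiliary symmetric state and keeping each inequality correctly oriented. In particular, one should resist importing the witness identity $\cA_r(\rho)=\Tr{\rho W^s_\rho}$ with $W^s_\rho=\log\rho-\log\mathcal{G}(\rho)$ directly into the weight SDP of Theorem \ref{thm:A_w}, since $W^s_\rho$ need not satisfy $W^s_\rho\leq\mathbb{I}$ and so the bound $\cA_w(\rho)\geq\Tr{\rho W^s_\rho}$ is not available; the joint-convexity route above avoids this difficulty entirely.
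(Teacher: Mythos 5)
Your proof is correct. For the second inequality it coincides with the paper's argument: the paper invokes convexity of $\cA_r$ applied to the optimal decomposition $\rho=(1-s)\sigma^*+s\tau^*$ together with $\cA_r(\tau^*)\leq\ln d$, and the joint convexity of relative entropy (with $\mathcal{G}(\rho)=(1-s)\sigma^*+s\mathcal{G}(\tau^*)$) is precisely what underlies that convexity, so the two derivations are the same computation. For the first inequality you take a slightly different, more self-contained route: the paper cites the convexity of $\cA_R$ and the universal bound $\cA_R(\tau^*)\leq d-1$ from Ref.~\cite{Piani2016}, whereas you build the robustness certificate explicitly, using $\tau\leq\mathbb{I}$, the symmetry of $\mathbb{I}/d$, and the normalization $\Tr{(1-s)\sigma+s\mathbb{I}}=1+(d-1)s$ to exhibit $\sigma'\in\mathcal{J}$ with $(1+(d-1)s)\sigma'\geq\rho$. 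Your version proves the needed ingredients rather than importing them, at the cost of a little bookkeeping; the paper's version is shorter but leans on external facts. Your closing caution about the witness $W^s_\rho=\log\rho-\log\mathcal{G}(\rho)$ failing $W^s_\rho\leq\mathbb{I}$ is apt but not needed for the argument.
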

\begin{proof}
Let $\rho=(1-\cA_w(\rho))\sigma^*+\cA_w(\rho)\tau^*$ be the optimal decomposition, where $\sigma^* \in \mathcal{J}$. Since  $\cA_R$ is convex \cite{Piani2016},
\begin{eqnarray*}
\cA_R(\rho)&\leq&(1-\cA_w(\rho))\cA_R(\sigma^*)+\cA_w(\rho)\cA_R(\tau^*)\\
&=&\cA_w(\rho)\cA_R(\tau^*),
\end{eqnarray*}
where $\cA_R(\sigma^*)=0$ comes from the fact that  $\sigma^*\in \mathcal{J}$. Additionally, $\cA_R(\tau^*)\leq d-1$  implies that
\begin{eqnarray*}
\cA_w(\rho)\geq \frac{\cA_R(\rho)}{\cA_R(\tau^*)}\geq \frac{1}{d-1}\cA_R(\rho).
\end{eqnarray*}
Similarly, from the convexity of the asymmetry measure $\cA_r(\rho)$ and using $\cA_r(\tau^*)\leq\ln d$, we get
\begin{eqnarray*}
\cA_w(\rho)&\geq& \frac{\cA_r(\rho)}{\cA_r(\tau^*)}\geq \frac{1}{\ln d}\cA_r(\rho).
\end{eqnarray*}
\end{proof}
\subsection{All pure asymmetric states have asymmetry weight 1}
If $\ket{\psi}$ is a pure asymmetric state, then its decomposition as $\proj{\psi}=(1-s)\sigma+s\tau$, where $\sigma\in \mathcal{J}$ and $\tau\in\cD(\cH)$, implies $\cA_w(\proj{\psi})=1$ as $s=1$. That is, for any pure asymmetric state, the asymmetry weight is always 1.\\

\section{Coherence weight}\label{sec:coh}
In this section, we introduce the coherence weight of a quantum state as a quantifier of coherence. Since coherence of a $d$-dimensional quantum system can be regarded as asymmetry with respect to a specific $d$-dimensional representation of $\sf G\equiv\sf U(1)$ \cite{Piani2016},  we can define the coherence weight of a given quantum state $\rho$ in a similar spirit as the asymmetry weight. That is,
 \begin{eqnarray}
 C_w(\rho)=\min_{\{\sigma,\tau\}}\left\{s\geq0:\rho=(1-s)\sigma+s\tau,  \sigma\in \mathcal{I}, \right.\nonumber\\
 \left.\tau\in\cD(\cH)\right\}.
 \end{eqnarray}
From the above definition, it is clear that the coherence weight, $C_w(\rho)$, of a given state $\rho$ can be interpreted operationally as the minimal number of genuine resource (coherent) states needed in the preparation process of the quantum state. According to \eqref{def2:asy_w}, coherence weight can also be expressed as
\begin{eqnarray}
 C_w(\rho)=\min_{\sigma\in \mathcal{I}}\set{s\geq0:\rho\geq(1-s)\sigma}.
\end{eqnarray}
In the following, by the incoherent operations we mean any quantum operation $\Phi$ with the Kraus representation $\set{K_i}$ such that $K_i\cI K^\dag_i\subseteq \cI$ for each $i$ \cite{Baumgratz2014}.

 \begin{prop}
Let $\Phi(\cdot)=\sum_i K_i\cdot K^\dag_i$ be an incoherent  operation with $K_i\cI K^\dag_i\subseteq \cI$ for each $i$. Then, the coherence weight of any state $\rho$ is monotonically nonincreasing on average, i.e.,
\begin{eqnarray}
C_w(\rho)\geq \sum_i p_i C_w(\rho_i),
\end{eqnarray}
where $p_i:=\Tr{K_i \rho K^\dag_i}$ and $\rho_i=\frac{K_i\rho K^\dag_i}{p_i}$.
\end{prop}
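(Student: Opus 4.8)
The plan is to transcribe the argument used for the monotonicity of the asymmetry weight (Proposition 3) almost verbatim, with the defining property of incoherent operations playing the structural role that covariance played there. The idea is to take an optimal free-plus-resource decomposition of $\rho$, push it through each Kraus suboperation $\rho \mapsto K_i \rho K_i^\dag$, and show that the resulting normalized pieces give a valid incoherent decomposition of each output $\rho_i$ whose resource weight is bounded by an explicit $s_i$.

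First I would fix an optimal decomposition $\rho = (1 - C_w(\rho))\sigma^* + C_w(\rho)\tau^*$ with $\sigma^* \in \cI$ and $\tau^* \in \cD(\cH)$. Applying the $i$th suboperation gives
\[
K_i \rho K_i^\dag = (1 - C_w(\rho))\, K_i \sigma^* K_i^\dag + C_w(\rho)\, K_i \tau^* K_i^\dag .
\]
I would then set
\[
s_i = \frac{1}{p_i}\, C_w(\rho)\,\Tr{K_i \tau^* K_i^\dag},
\]
and define $\sigma_i = \frac{1}{(1-s_i)p_i}(1-C_w(\rho))\,K_i \sigma^* K_i^\dag$ and $\tau_i = \frac{1}{s_i p_i}\,C_w(\rho)\,K_i \tau^* K_i^\dag$. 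A short bookkeeping check, exactly as in Proposition 3, confirms that these are genuine density operators (the normalizing factors match because $(1-s_i)p_i = (1-C_w(\rho))\Tr{K_i\sigma^*K_i^\dag}$) and that $\rho_i = (1 - s_i)\sigma_i + s_i \tau_i$ by construction.

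The crucial step — and the only place where the hypothesis on $\Phi$ is used — is verifying that $\sigma_i \in \cI$. This is precisely where the incoherence condition $K_i \cI K_i^\dag \subseteq \cI$ substitutes for the covariance property used before: since $\sigma^* \in \cI$, the operator $K_i \sigma^* K_i^\dag$ is incoherent up to normalization, so $\sigma_i$ is a legitimate free state and the definition \eqref{def2:asy_w}-style characterization of $C_w$ yields $C_w(\rho_i) \leq s_i$. Summing over $i$ and invoking trace preservation $\sum_i K_i^\dag K_i = \I$ gives $\sum_i p_i s_i = C_w(\rho)\sum_i \Tr{K_i \tau^* K_i^\dag} = C_w(\rho)$, and hence $\sum_i p_i\, C_w(\rho_i) \leq \sum_i p_i s_i = C_w(\rho)$, which is the claim.

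I do not expect a genuine obstacle: the whole structure is parallel to the asymmetry-weight proof, so the only points requiring care are the normalization bookkeeping (ensuring each $\sigma_i,\tau_i$ is a bona fide element of $\cD(\cH)$) and the single conceptual observation that $K_i \cI K_i^\dag \subseteq \cI$ guarantees the free part stays free under each suboperation.
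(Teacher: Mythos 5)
Your proposal is correct and follows essentially the same route as the paper: the same optimal decomposition, the same definitions of $s_i$, $\sigma_i$, $\tau_i$, the observation that $K_i\cI K_i^\dag\subseteq\cI$ keeps the free part free, and the same summation using trace preservation. The paper's version is a verbatim transcription of its asymmetry-weight argument (it even leaves a stray $\mathcal{J}$ where $\cI$ is meant), so your bookkeeping remarks are, if anything, slightly more careful.
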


\begin{widetext}

\begin{figure}
\subfigure[]{\includegraphics[scale=0.25]{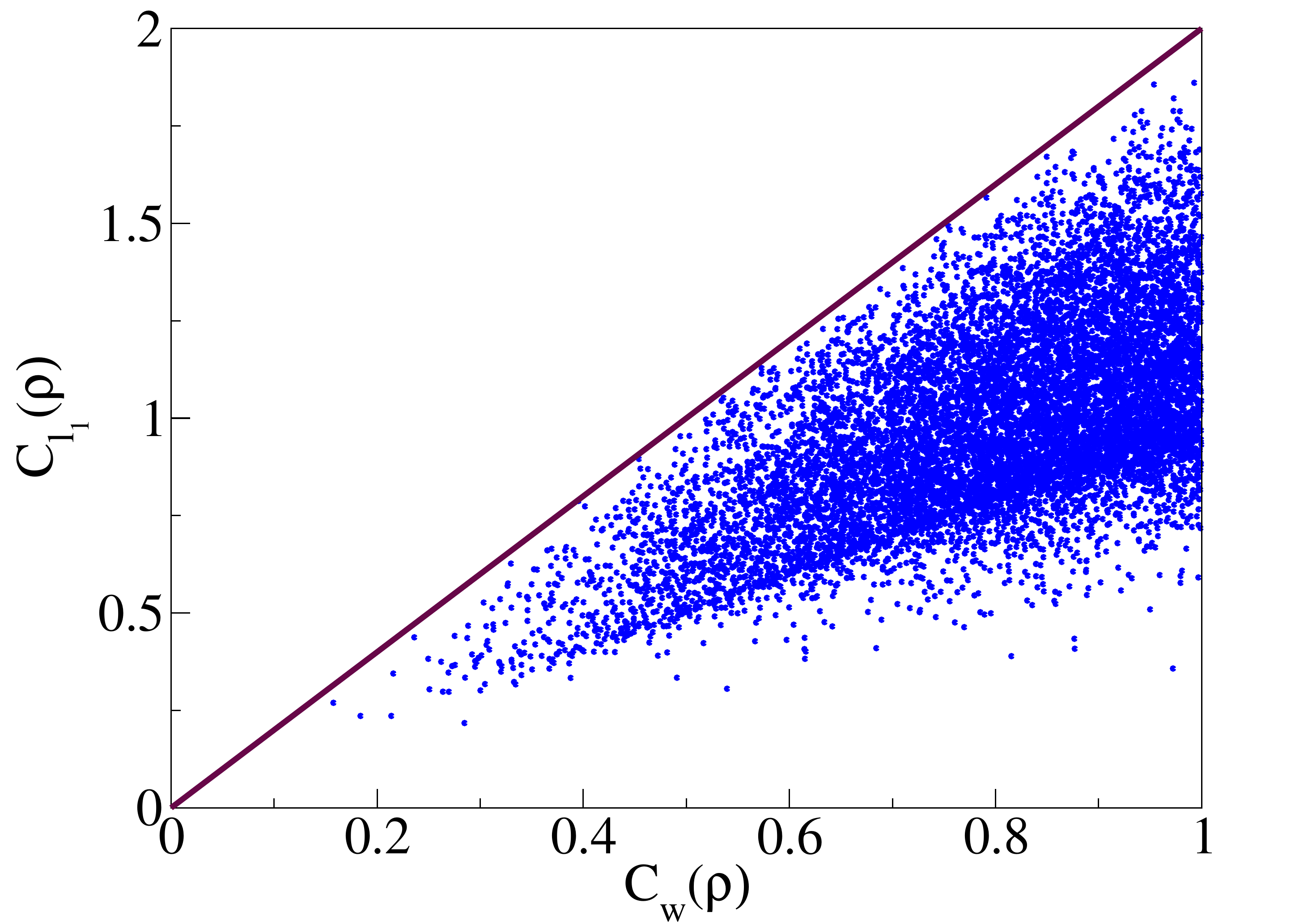}}
\subfigure[]{\includegraphics[scale=0.25]{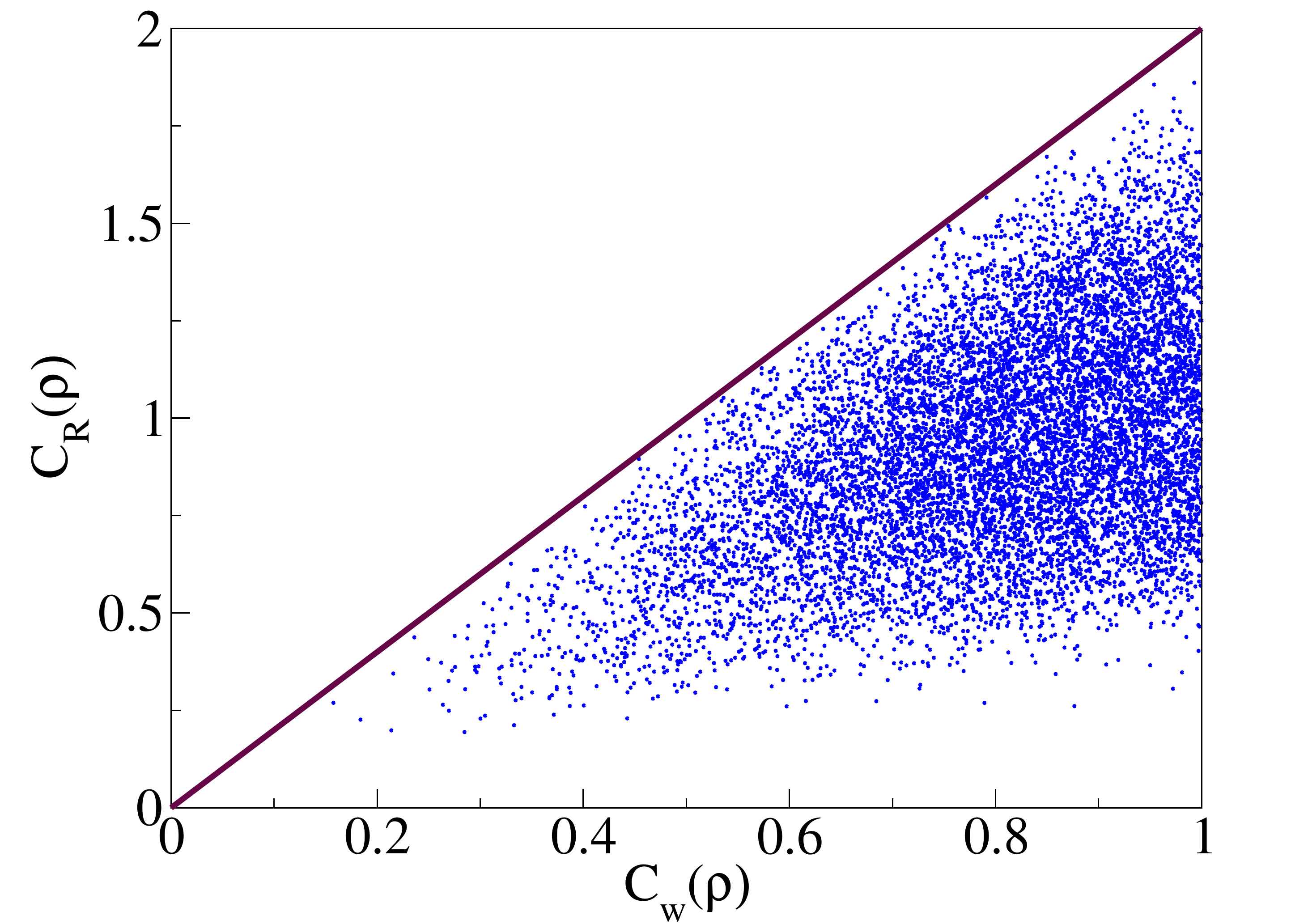}}
\caption{(Color online) Comparison between the coherence weight $C_w$ (horizontal axis) and (a) the $l_1$ norm of coherence $C_{l_1}$ and (b) the robustness of coherence  $C_{R}$ (vertical axes), for ${10}^4$ randomly generated quantum states (uniformly distributed with respect to the Haar measure) of dimension $3$. All the axes are unitless.}
\label{fig:cwvscl1vscrdim3}
\end{figure}

\end{widetext}

 \begin{proof}
Let $\rho=(1-C_w(\rho))\sigma^*+C_w(\rho)\tau^*$ be the optimal decomposition, where $\sigma^*\in\cI$ and $\tau^*\in\cD(\cH)$. Then
\begin{eqnarray*}
K_i\rho K^\dag_i=(1-C_w(\rho))K_i \sigma^* K^\dag_i+ C_w(\rho)K_i\tau^*K^\dag_i.
\end{eqnarray*}
Now, let us consider
\begin{eqnarray*}
\sigma_i&=&\frac{1}{(1-s_i)p_i}(1-C_w(\rho))K_i \sigma^* K^\dag_i,\\
\tau_i&=&\frac{1}{s_i p_i}C_w(\rho)K_i\tau^*K^\dag_i,\\
s_i&=&\frac{1}{p_i}C_w(\rho)\Tr{K_i\tau^*K^\dag_i}.
\end{eqnarray*}
Then, $\rho_i=(1-s_i)\sigma_i+s_i \tau_i$. As $\Phi_i(\sigma^*)\in \mathcal{J}$, we have $C_w(\rho_i)\leq s_i$. Therefore, $\sum_i p_i C_w(\rho_i) \leq\sum_i p_i s_i = C_w(\rho)$. This completes the proof of the proposition.
\end{proof}
Moreover, $C_w$ is a convex function of density matrices, i.e.,
\begin{eqnarray}
C_w(p\rho_1+(1-p)\rho_2)\leq p C_w(\rho_1)+(1-p)C_w(\rho_2),
\end{eqnarray}
where $p\in [0,1]$ and $\rho_1,\rho_2\in \cD(\cH)$. Since the definition of the coherence weight is very similar to the asymmetry weight, it can also be expressed as a semidefinite program.
\begin{prop}
The coherence weight $C_w$ can be characterized as the solution of the following semidefinite program:
\begin{eqnarray}\label{eq:CW_f}
\max \Tr{\rho W }, \nonumber \\
 \text{such that}~~\Delta(W)\leq0,\nonumber \\
W\leq \mathbb{I},\nonumber
\end{eqnarray}
where $\Delta(\cdot)=\sum_i\proj{i}\cdot\proj{i}$.
\end{prop}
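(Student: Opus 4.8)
The plan is to mirror the proof of Theorem~\ref{thm:A_w} verbatim, exploiting the fact that the dephasing map $\Delta$ plays exactly the role that the group twirl $\mathcal{G}$ played in the asymmetry setting. The structural facts I will lean on are: (i) the incoherent states are precisely the fixed points of dephasing, i.e.\ $\mathcal{I}=\set{\sigma\in\cD(\cH):\Delta(\sigma)=\sigma}$; (ii) $\Delta$ is idempotent, $\Delta^2=\Delta$; (iii) $\Delta$ is trace preserving, $\Tr{\Delta(A)}=\Tr{A}$; and (iv) $\Delta$ is self-adjoint with respect to the Hilbert--Schmidt inner product, $\Delta^*=\Delta$, which holds because its Kraus operators $\proj{i}$ are Hermitian. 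These are the coherence analogues of the four properties of $\mathcal{G}$ used earlier.

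First I would start from the reformulation $C_w(\rho)=\min_{\sigma\in\mathcal{I}}\set{s\geq0:\rho\geq(1-s)\sigma}$ established above and set $\widetilde{\sigma}=(1-s)\sigma$, so that $1-C_w(\rho)$ equals the optimum of the primal program
\begin{eqnarray*}
\max \Tr{\widetilde{\sigma}},\\
\text{such that}~~\widetilde{\sigma}\leq\rho,~\Delta(\widetilde{\sigma})=\widetilde{\sigma},~\widetilde{\sigma}\geq0,
\end{eqnarray*}
where the constraint $\Delta(\widetilde{\sigma})=\widetilde{\sigma}$ enforces incoherence via (i). Next I would show this is equivalent to the relaxed program in which the two constraints $\widetilde{\sigma}\leq\rho$ and $\Delta(\widetilde{\sigma})=\widetilde{\sigma}$ are replaced by the single constraint $\Delta(\widetilde{\sigma})\leq\rho$: any feasible point of the first program is feasible for the relaxation, while conversely, applying $\Delta$ to a feasible point of the relaxation and using $\Delta^2=\Delta$ together with $\Tr{\Delta(\widetilde{\sigma})}=\Tr{\widetilde{\sigma}}$ (facts (ii)--(iii)) produces a fixed-point feasible point of the original program with the same objective value. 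This equivalence is the one genuinely content-bearing step.

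The remainder is routine duality. Writing the relaxed program in standard form with $B=\mathbb{I}$, $\Lambda^\dag=\Delta$, and $C=\rho$, the dual reads $\min\Tr{\rho X}$ subject to $\Delta(X)\geq\mathbb{I}$ and $X\geq0$; here I use fact (iv), $\Delta^*=\Delta$, so that the adjoint map in the dual constraint is again $\Delta$. The dual is strictly feasible, taking $X=\alpha\mathbb{I}$ with $\alpha$ large and noting $\Delta(\mathbb{I})=\mathbb{I}$, so Slater's condition gives strong duality and $C_w(\rho)=\max\set{1-\Tr{\rho X}:\Delta(X)\geq\mathbb{I},X\geq0}$. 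Substituting $W=\mathbb{I}-X$ converts $\Delta(X)\geq\mathbb{I}$ into $\Delta(W)\leq0$ (again using $\Delta(\mathbb{I})=\mathbb{I}$) and $X\geq0$ into $W\leq\mathbb{I}$, yielding the stated form. I do not anticipate a real obstacle, since this is a direct transcription of Theorem~\ref{thm:A_w}; the only points demanding care are verifying $\Delta^*=\Delta$ and $\Delta(\mathbb{I})=\mathbb{I}$, which ensure that both the dualization and the final change of variables go through exactly as in the asymmetry case.
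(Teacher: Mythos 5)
Your proposal is correct and is exactly the argument the paper intends: the paper's own proof simply says it is ``similar to the proof of Theorem~\ref{thm:A_w},'' and your transcription with $\Delta$ in place of $\mathcal{G}$, together with the explicit checks that $\Delta$ is idempotent, trace preserving, self-adjoint, and satisfies $\Delta(\mathbb{I})=\mathbb{I}$, is precisely what is needed to make that similarity rigorous.
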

\begin{proof}
The proof is similar to the proof of Theorem \ref{thm:A_w}.
\end{proof}
In the Supplemental Material \cite{supplement}, we provide the MATLAB \cite{matlab} code to evaluate the above semidefinite program numerically and calculate the coherence weight for a given state $\rho$ using the open-source MATLAB-based modeling system for convex optimization, CVX \cite{cvx1,cvx2}.  

\begin{prop}
\label{coh-weight-dis}
For a given  state $\rho\in \cD(\cH)$, we have
\begin{eqnarray}
C_w(\rho)\geq
\frac{\norm{\rho-\Delta(\rho)}^2_2}{\norm{\rho}_\infty}
\geq \norm{\rho-\Delta(\rho)}^2_2,
\end{eqnarray}
where $||A||_{2}^{2}:=\Tr{A^{\dagger}A}$ is the Hilbert-Schmidt norm and $||A||_{\infty}:=\max_{i} \lambda_i$ with $\lambda_{i}$ being the $i$th eigenvalue of $\sqrt{A^{\dagger}A}$.
\end{prop}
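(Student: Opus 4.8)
The plan is to mirror the proof of Proposition~\ref{prop-asym-weight-dis}, with the dephasing map $\Delta$ playing the role of the group average $\cG$, and to invoke the semidefinite program for $C_w$ established in the preceding proposition, whose optimum is $C_w(\rho)=\max\set{\Tr{\rho W}:\Delta(W)\leq 0,\,W\leq\I}$. Since any $W$ feasible for this SDP furnishes a lower bound $C_w(\rho)\geq\Tr{\rho W}$, it suffices to exhibit a single good witness.

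First I would take $W=\frac{\rho-\Delta(\rho)}{\norm{\rho}_\infty}$ and check feasibility. Because $\Delta$ is idempotent ($\Delta^2=\Delta$), one has $\Delta(\rho-\Delta(\rho))=\Delta(\rho)-\Delta(\rho)=0$, so $\Delta(W)=0\leq 0$; and because $\rho\geq 0$ forces the diagonal operator $\Delta(\rho)$ to be positive, we get $W\leq\frac{\rho}{\norm{\rho}_\infty}\leq\I$, the last step using $\rho\leq\norm{\rho}_\infty\I$. Thus $W$ is feasible.

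Next I would compute $\Tr{\rho W}$ directly. The key identity is $\Tr{\rho\Delta(\rho)}=\Tr{\Delta(\rho)^2}$: since $\Delta(\rho)$ is diagonal in the reference basis, only the diagonal entries of $\rho$ contribute, so $\Tr{\rho\Delta(\rho)}=\sum_i\rho_{ii}^2=\Tr{\Delta(\rho)^2}$. Consequently
\[
\Tr{\rho W}=\frac{\Tr{\rho^2}-\Tr{\rho\Delta(\rho)}}{\norm{\rho}_\infty}=\frac{\Tr{\rho^2}-\Tr{\Delta(\rho)^2}}{\norm{\rho}_\infty}=\frac{\norm{\rho-\Delta(\rho)}^2_2}{\norm{\rho}_\infty},
\]
which yields the first inequality. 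The second inequality then follows immediately from $\norm{\rho}_\infty\leq 1$ for any density operator, so that dividing by $\norm{\rho}_\infty$ only enlarges the quantity.

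There is no genuinely hard step here: the only points requiring care are the two feasibility checks and the trace identity, all of which rest on $\Delta$ being an idempotent, positive, self-adjoint projection onto diagonal operators. Indeed, since coherence is precisely asymmetry for the chosen ${\sf U}(1)$ representation, where $\Delta$ coincides with $\cG$, the statement is the direct specialization of Proposition~\ref{prop-asym-weight-dis}, and writing it out merely confirms that every step of that argument carries over verbatim.
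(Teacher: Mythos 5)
Your proof is correct and is exactly the argument the paper intends: the paper omits the proof, stating only that it follows "in a similar spirit" to Proposition~\ref{prop-asym-weight-dis}, whose proof uses precisely your witness $W=\frac{\rho-\mathcal{G}(\rho)}{\norm{\rho}_\infty}$ with $\mathcal{G}$ replaced here by $\Delta$, the same feasibility checks, and the same trace identity $\Tr{\rho\,\Delta(\rho)}=\Tr{\Delta(\rho)^2}$.
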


\begin{prop}\label{prop:Cw_vs_C_R}
Given a quantum state $\rho\in\cD(\cH)$, we have
\begin{eqnarray}
\label{ineq:CwvsCR}C_w(\rho)&\geq& \frac{1}{d-1}C_R(\rho),\\
\label{ineq:CwvsC1}C_w(\rho)&\geq& \frac{1}{d-1}C_{l_1}(\rho),\\
C_w(\rho)&\geq& \frac{1}{\ln d}C_r(\rho).
\end{eqnarray}
\end{prop}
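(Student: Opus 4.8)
The plan is to run, almost verbatim, the argument used to establish Proposition~\ref{prop:Aw_vs_A_R}, replacing the symmetry group by the $\sf U(1)$ representation that defines coherence and the symmetric states $\mathcal{J}$ by the incoherent states $\mathcal{I}$. A single mechanism drives all three inequalities: the convexity of the coherence measure in question, together with the facts that each measure vanishes on incoherent states and is bounded above over all of $\cD(\cH)$ by a dimension-dependent constant.

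First I would fix the optimal weight decomposition $\rho=(1-C_w(\rho))\sigma^*+C_w(\rho)\tau^*$ with $\sigma^*\in\mathcal{I}$ and $\tau^*\in\cD(\cH)$, which exists by the definition of $C_w$. Applying convexity of $C_R$ and using $C_R(\sigma^*)=0$ (as $\sigma^*$ is incoherent) gives
\begin{eqnarray*}
C_R(\rho)\leq(1-C_w(\rho))C_R(\sigma^*)+C_w(\rho)C_R(\tau^*)=C_w(\rho)C_R(\tau^*).
\end{eqnarray*}
The uniform bound $C_R(\tau^*)\leq d-1$ then rearranges to $C_w(\rho)\geq\frac{1}{d-1}C_R(\rho)$, which is \eqref{ineq:CwvsCR}. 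The same three-line computation, run with $C_{l_1}$ in place of $C_R$ and using $C_{l_1}(\sigma^*)=0$ with $C_{l_1}(\tau^*)\leq d-1$, yields \eqref{ineq:CwvsC1}; run with $C_r$, using $C_r(\sigma^*)=0$ and $C_r(\tau^*)\leq\ln d$, it yields the third inequality.

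The only nonelementary inputs are the three facts I invoke for the $\tau^*$ term. Convexity of each measure is standard: $C_R$ is convex because it is a robustness (minimum-over-mixtures) quantifier, $C_{l_1}$ is convex because it is a sum of absolute values of linear functionals of $\rho$, and $C_r$ is convex by joint convexity of the relative entropy together with linearity of $\Delta$. The uniform upper bounds $C_R\leq d-1$, $C_{l_1}\leq d-1$, and $C_r\leq\ln d$ over $\cD(\cH)$ are likewise known in the coherence literature~\cite{Baumgratz2014,Napoli2016}; each is saturated by the maximally coherent state $\frac{1}{\sqrt{d}}\sum_i\ket{i}$, for which, e.g., a direct count gives $C_{l_1}=d-1$. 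The only real work is thus citing or verifying these maximal values; once they are in hand, all three inequalities follow immediately from convexity applied to the optimal decomposition, exactly mirroring Proposition~\ref{prop:Aw_vs_A_R}.
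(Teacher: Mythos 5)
Your proof is correct and follows essentially the same route the paper intends: the paper itself only remarks that Proposition~\ref{prop:Cw_vs_C_R} is proved ``in a similar spirit'' to Proposition~\ref{prop:Aw_vs_A_R}, and your argument---optimal weight decomposition, convexity of each measure, vanishing on incoherent states, and the dimension-dependent upper bounds $C_R,C_{l_1}\leq d-1$ and $C_r\leq\ln d$---is exactly that transposition. No gaps.
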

The Propositions \ref{coh-weight-dis} and \ref{prop:Cw_vs_C_R} can be proved in a similar spirit as we have proved Propositions \ref{prop-asym-weight-dis} and \ref{prop:Aw_vs_A_R}, respectively. See also Figures \ref{fig:cwvscl1vscrdim3} and \ref{fig:cwvscl1vscrdim4}.

The $l_1$ norm of coherence has played a pivotal role in the quantification of coherence and its operational meaning has been investigated recently in Ref. \cite{Rana2016}. In the following, we explore the relationships between the coherence weight, the robustness of coherence, and the $l_1$ norm coherence. In addition to the simple connections \eqref{ineq:CwvsCR} and \eqref{ineq:CwvsC1}, we find better relationships between the three measures for special classes of states in finite-dimensional Hilbert spaces.

\begin{prop}\label{prop:6}
For a quantum state $\rho\in \cD(\cH)$, if there exists an incoherent unitary $U=\sum_j e^{i\phi_k}\proj{k}$ such that $\rho'=U\rho U^\dag$ with $\rho'_{ij}=-|\rho_{ij}|$, for any $i\neq j$, then
\begin{eqnarray}
C_w(\rho)\geq C_{l_1}(\rho).
\end{eqnarray}
\end{prop}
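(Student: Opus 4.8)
The plan is to exhibit an explicit feasible point $W$ for the semidefinite program characterizing the coherence weight and to show that its overlap with the state equals $C_{l_1}(\rho)$. Recall from the SDP form of $C_w$ that every Hermitian $W$ with $\Delta(W)\leq 0$ and $W\leq\mathbb{I}$ provides the lower bound $C_w(\rho)\geq\Tr{\rho W}$; it therefore suffices to produce one such witness whose expectation against $\rho$ is $\sum_{i\neq j}|\rho_{ij}|$.

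First I would use the hypothesis to pass to a state with real, sign-definite off-diagonal entries. Since $U=\sum_k e^{i\phi_k}\proj{k}$ and its inverse $U^\dag$ are incoherent unitaries, applying the average monotonicity of $C_w$ to the single-Kraus channels $U(\cdot)U^\dag$ and $U^\dag(\cdot)U$ gives $C_w(\rho)=C_w(\rho')$ for $\rho'=U\rho U^\dag$; moreover $|\rho'_{ij}|=|\rho_{ij}|$ shows $C_{l_1}(\rho')=C_{l_1}(\rho)$. Hence it is enough to prove $C_w(\rho')\geq C_{l_1}(\rho')$ for the state $\rho'$ whose off-diagonal entries satisfy $\rho'_{ij}=-|\rho_{ij}|$ for $i\neq j$.

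The key construction is the witness $W=\mathbb{I}-E$, where $E=\sum_{i,j}\out{i}{j}=d\proj{+}$ with $\ket{+}=\tfrac{1}{\sqrt d}\sum_i\ket{i}$ is the all-ones operator. Feasibility is then immediate: $E$ has unit diagonal, so $\Delta(E)=\mathbb{I}$ and $\Delta(W)=\mathbb{I}-\mathbb{I}=0\leq 0$; and because $E=d\proj{+}\geq 0$ we get $W=\mathbb{I}-E\leq\mathbb{I}$. Computing the overlap, $\Tr{\rho'W}=\Tr{\rho'}-\Tr{\rho'E}=1-\sum_{i,j}\rho'_{ij}=-\sum_{i\neq j}\rho'_{ij}=\sum_{i\neq j}|\rho_{ij}|=C_{l_1}(\rho')$, where I used $\sum_i\rho'_{ii}=1$ together with the sign hypothesis. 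Since $W$ is SDP-feasible, $C_w(\rho')\geq\Tr{\rho'W}=C_{l_1}(\rho')$, and unwinding the two invariances yields $C_w(\rho)\geq C_{l_1}(\rho)$.

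The step I expect to carry the real content is the operator inequality $W\leq\mathbb{I}$. For a generic state the natural $l_1$-witness is the phase-matched off-diagonal operator $\sum_{i\neq j}e^{-i\arg\rho_{ji}}\out{i}{j}$, whose spectral norm need not be bounded by $1$, so it is not SDP-feasible in general; the hypothesis is precisely what aligns all off-diagonal phases so that this witness collapses to $\mathbb{I}-E$ with $E\succeq 0$, making $W\leq\mathbb{I}$ hold automatically. Equivalently, one may skip the reduction and take the $\rho$-dependent witness $W=U^\dag(\mathbb{I}-E)U$ directly: conjugation by the diagonal unitary $U$ commutes with $\Delta$ and preserves operator ordering, so this $W$ is feasible and satisfies $\Tr{\rho W}=\Tr{\rho'(\mathbb{I}-E)}=C_{l_1}(\rho)$.
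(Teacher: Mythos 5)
Your proof is correct and is in substance the paper's own argument viewed from the dual side: the paper sandwiches the primal inequality $\rho'\geq(1-C_w(\rho))\sigma^*$ with the maximally coherent state $\ket{\psi_+}$, which is exactly equivalent to your feasible witness $W=\mathbb{I}-d\proj{\psi_+}$ in the SDP form. All the steps check out (feasibility of $W$, the trace computation giving $C_{l_1}(\rho')$, and the reduction $C_w(\rho)=C_w(\rho')$ via monotonicity under the incoherent unitaries $U$ and $U^\dag$), so nothing is missing.
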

\begin{proof}
Similar to the method in \cite{Piani2016}, consider the optimal incoherent state $\sigma^*$ such that
\begin{eqnarray*}
\rho\geq (1-C_w(\rho))\sigma^*.
\end{eqnarray*}
Apply the incoherent unitary $U$ on both sides, then we have
\begin{eqnarray*}
\rho'=U\rho U^\dag\geq (1-C_w(\rho))\sigma^* .
\end{eqnarray*}
Let us take the maximally coherent state $\ket{\psi_+}=\frac{1}{\sqrt{d}}\ket{j}$, then
\begin{eqnarray*}
\Innerm{\psi_+}{\rho'}{\psi_+}
\geq (1-C_w(\rho))\Innerm{\psi_+}{\sigma^*}{\psi_+},
\end{eqnarray*}
where
\begin{eqnarray*}
\Innerm{\psi_+}{\rho'}{\psi_+}=\frac{1}{d}\left(1-\sum_{i\neq j}|\rho_{ij}|\right)=\frac{1}{d}(1-C_{l_1}(\rho)),
\end{eqnarray*}
and $\Innerm{\psi_+}{\sigma^*}{\psi_+}=\frac{1}{d}$. That is,
\begin{eqnarray*}
\frac{1}{d}(1-C_{l_1}(\rho))\geq \frac{1}{d}(1-C_w(\rho)),
\end{eqnarray*}
which implies that $C_w(\rho)\geq C_{l_1}(\rho)$.
\end{proof}

~\\

\subsection{Exact coherence weight for generalized \texorpdfstring{$\bm{X}$}{}-states, Werner states, and Gisin states}
\label{subsec:exact_coh}
Here we find the exact analytical expressions of the coherence weight for some relevant classes of mixed states of $d$-dimensional single and bipartite quantum systems.

\subsubsection{Generalized \texorpdfstring{$X$}{}-states}
Generalized $X$ states \cite{genXstate2009, genXstate2010} form a $\left(2^{N+1}-1\right)$-parameter family of $N$-qubit states that encompass several classes of states like Werner states, Bell-diagonal states, and Dicke states. Given the complete characterization of the algebraic structure underlying the generalized $X$ states \cite{genXstate2009, genXstate2010}, these states are of paramount interest for analytical calculations in quantum information theory (see, e.g., Refs. \cite{Yu2007, Ali2010}). Proposition \ref{prop:6} holds for the generalized $X$ states, which have the form \cite{Piani2016}
\begin{equation*}
\rho=\left\{
\begin{array}{ccc}
\sum^{d/2}_{k=0}\eta_k,& d~\text{is even},\\
\sum^{\lfloor d/2\rfloor}_{k=0}\eta_k+\eta_c,& d ~\text{is odd},
\end{array}
\right.
\end{equation*}
where
\begin{eqnarray*}
\eta_k&=&\rho_{kk}\proj{k}+\rho_{k,d-1-k}\ket{k}\bra{d-1-k}\\
&+&\rho_{d-1-k,k}\ket{d-1-k}\bra{k}\\
&+&\rho_{d-1-k,d-1-k}\proj{d-1-k}
\end{eqnarray*}
and $\eta_c=\rho_{\lfloor d/2\rfloor+1}\proj{\lfloor d/2\rfloor+1}$. Thus, one can see that the $l_1$ norm of coherence of generalized $X$ states is always less than 1, no matter how large the dimension $d$ is.

\begin{widetext}

\begin{figure}
\centering     %%% not \center
\subfigure[]{\includegraphics[scale=0.25]{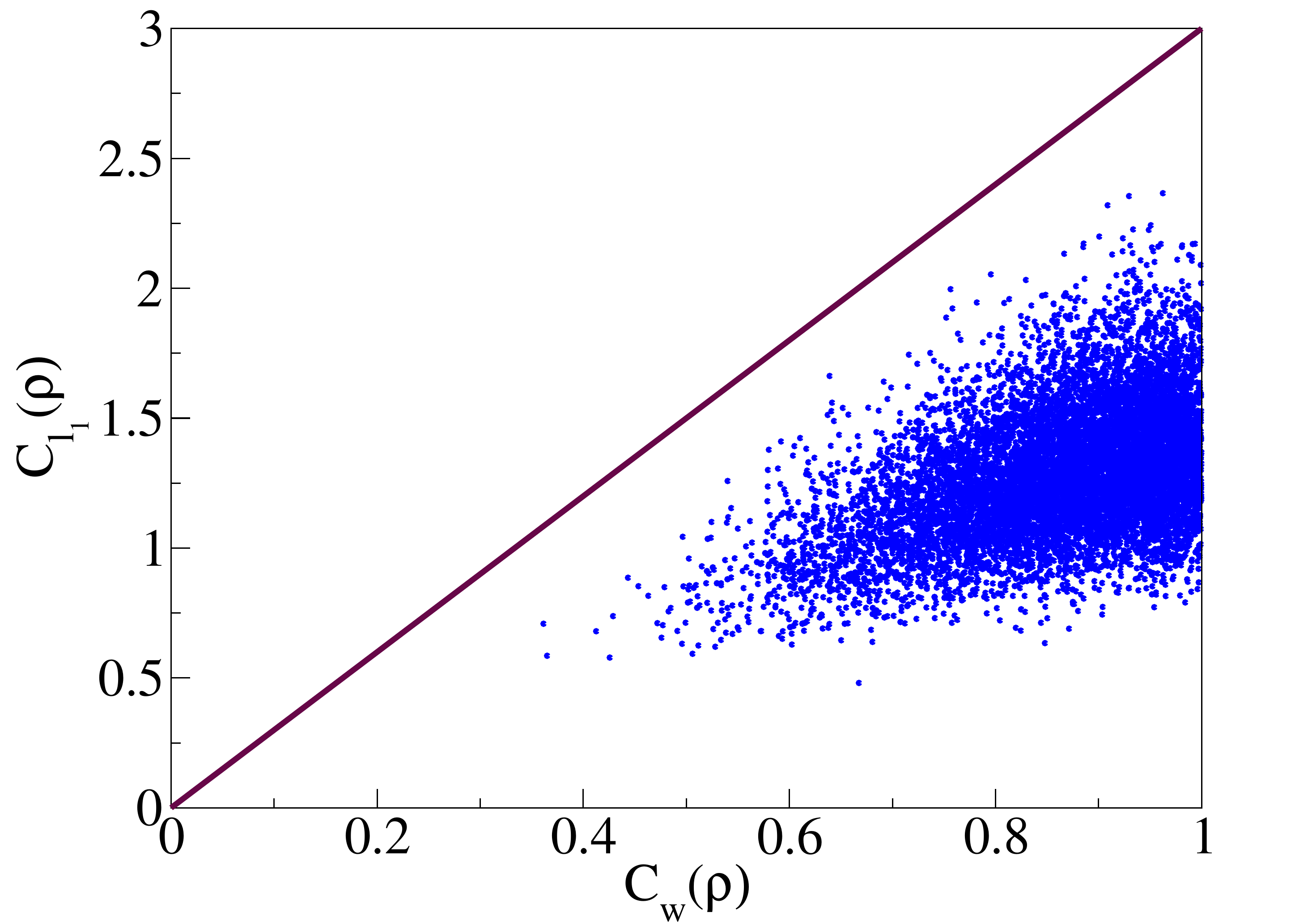}}
\subfigure[]{\includegraphics[scale=0.25]{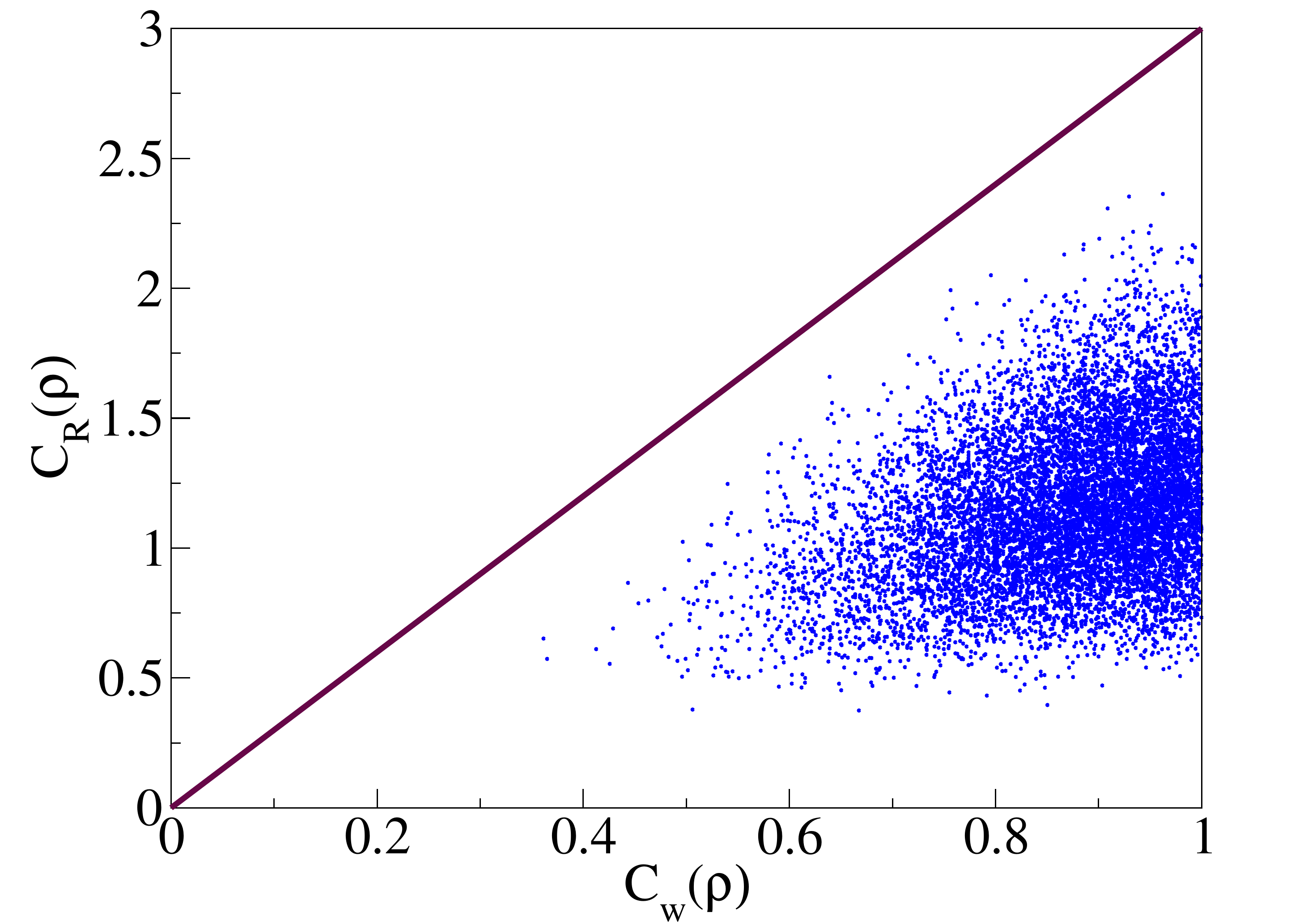}}
\caption{(Color online) Comparison between the coherence weight $C_w$ (horizontal axis) and (a) the $l_1$ norm of coherence $C_{l_1}$ and (b) the robustness of coherence  $C_{R}$ (vertical axes), for ${10}^4$ randomly generated quantum states (uniformly distributed with respect to the Haar measure) of dimension $4$. All the axes are unitless.}
\label{fig:cwvscl1vscrdim4}
\end{figure}

\end{widetext}
 
\subsubsection{Werner states}
A Werner state $\rho_W$ \cite{Werner1989, Werner2001}, was originally defined as a bipartite quantum state that is invariant under all unitary operators of the form $U\otimes U$. That is, a $d$-dimensional Werner state satisfies $\rho_W =(U\otimes U)\rho_W (U^{\dagger }\otimes U^{\dagger })$ for all unitary operators $U$ acting on the $d$-dimensional Hilbert space. Every Werner state can be written as a mixture of projectors onto the symmetric and antisymmetric subspaces. The only parameter that defines Werner states is the relative weight, say $\alpha\in[0,1]$, between the symmetric and antisymmetric subspaces.
% The states with $\alpha \geq \frac{1}{2}$ are separable and are entangled for $\alpha < \frac{1}{2}$ \cite{Werner2001}.
Here, we show that for Werner states in any dimension $d$, the coherence weight, the robustness of coherence, and the $l_1$ norm coherence are all equal. 

\begin{prop}
 For $d$-dimensional Werner states $\rho_W(\alpha)=\alpha\frac{\mathbb{I}-F}{d(d-1)}+(1-\alpha)\frac{\mathbb{I}}{d^2}$
 with $F=\sum^{d-1}_{i,j=0}\ket{ij}\bra{ji}$ and $\alpha\in[0,1]$, we have
 \begin{eqnarray*}
 C_w(\rho_W(\alpha))= C_R(\rho_W(\alpha))= C_{l_1}(\rho_W(\alpha))=\alpha.
 \end{eqnarray*}
\end{prop}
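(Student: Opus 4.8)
The plan is to compute each of the three quantities separately and verify they all equal $\alpha$, exploiting the simple structure of the Werner state. First I would write out the matrix elements of $\rho_W(\alpha)$ in the computational basis. Using $F=\sum_{i,j}\ket{ij}\bra{ji}$, the diagonal entries of $\mathbb{I}-F$ are $d-1$ on the basis states $\ket{ii}$ (since $F\ket{ii}=\ket{ii}$ gives $\mathbb{I}-F$ value $0$ there) and $1$ on states $\ket{ij}$ with $i\neq j$, while the only off-diagonal entries come from $-F$, which connects $\ket{ij}$ to $\ket{ji}$ with value $-1$. Scaling by $\frac{\alpha}{d(d-1)}$ and adding the maximally mixed contribution $\frac{1-\alpha}{d^2}\mathbb{I}$, the off-diagonal matrix elements are exactly $\rho_{ij,ji}=-\frac{\alpha}{d(d-1)}$ for the $d(d-1)$ ordered pairs $i\neq j$.

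The $l_1$ norm is then immediate: summing $|\rho_{ij,ji}|$ over all $d(d-1)$ off-diagonal positions gives $C_{l_1}(\rho_W(\alpha))=d(d-1)\cdot\frac{\alpha}{d(d-1)}=\alpha$. Next I would establish the lower bound $C_w(\rho_W(\alpha))\geq\alpha$. The clean way is to invoke Proposition~\ref{prop:6}: all off-diagonal entries are already real and negative, so taking $U=\mathbb{I}$ (i.e. all phases zero) satisfies $\rho'_{ij}=-|\rho_{ij}|$ trivially, hence $C_w(\rho_W(\alpha))\geq C_{l_1}(\rho_W(\alpha))=\alpha$. Combined with the general chain $C_{l_1}\geq C_R$ (or the known bound $C_R\geq C_{l_1}/(d-1)$ is the wrong direction, so I would instead use the established ordering $C_w\geq\frac{1}{d-1}C_{l_1}$ from \eqref{ineq:CwvsC1} only as a sanity check) I need a genuine upper bound matching $\alpha$.

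The upper bound $C_w(\rho_W(\alpha))\leq\alpha$ is the crux and where I would spend the most care. Using the characterization $C_w(\rho)=\min_{\sigma\in\cI}\{s\geq0:\rho\geq(1-s)\sigma\}$, I would exhibit an explicit incoherent state $\sigma$ and show $\rho_W(\alpha)-(1-\alpha)\sigma\geq0$. The natural guess, by symmetry, is the maximally mixed (diagonal) state $\sigma=\frac{\mathbb{I}}{d^2}$: one checks that $\rho_W(\alpha)-(1-\alpha)\frac{\mathbb{I}}{d^2}=\alpha\frac{\mathbb{I}-F}{d(d-1)}$, which is manifestly positive semidefinite since $\mathbb{I}-F\geq0$ (its eigenvalues are $0$ on the symmetric subspace and $2$ on the antisymmetric subspace). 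This yields $C_w\leq\alpha$ directly and is the decisive step.

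Finally, to pin down $C_R(\rho_W(\alpha))=\alpha$, I would either invoke a sandwiching argument or a direct SDP computation. The cleanest route is to note $C_w\leq C_R\leq C_{l_1}$ need not hold in general, so instead I would argue: the lower bound $C_R\geq\alpha$ follows from a witness argument (the decomposition $\rho+s\tau=(1+s)\delta$ with $\delta$ incoherent forces, via the off-diagonal entries, $s\geq C_{l_1}=\alpha$ by the same maximally-coherent-vector pairing used in Proposition~\ref{prop:6}), while the upper bound $C_R\leq\alpha$ comes from exhibiting the mixing state $\tau$ built from $\frac{F-\mathbb{I}/d\cdot(\text{diag})}{\cdots}$ that symmetrizes the off-diagonals to an incoherent target. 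Since $C_w\leq C_R$ always holds (the weight decomposition is a special case feasible for robustness) and we will have shown $C_w=\alpha$ and $C_R\leq C_{l_1}=\alpha$ via the same diagonalizing mixture, all three collapse to $\alpha$. The main obstacle is constructing the explicit feasible $\tau$ for the robustness upper bound; the weight and $l_1$ computations are routine once the matrix entries are written down, and the positivity of $\mathbb{I}-F$ does all the work for the weight.
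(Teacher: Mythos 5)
Your treatment of $C_{l_1}$ and $C_w$ is correct and is essentially the paper's argument with one step made more concrete. The lower bound $C_w\geq C_{l_1}=\alpha$ via Proposition~\ref{prop:6} with $U=\mathbb{I}$ is exactly what the paper does. For the upper bound the paper invokes convexity, $C_w(\rho_W(\alpha))\leq \alpha\, C_w\bigl(\tfrac{\mathbb{I}-F}{d(d-1)}\bigr)+(1-\alpha)\,C_w\bigl(\tfrac{\mathbb{I}}{d^2}\bigr)\leq\alpha$, whereas you exhibit the feasible point directly: $\rho_W(\alpha)-(1-\alpha)\tfrac{\mathbb{I}}{d^2}=\alpha\tfrac{\mathbb{I}-F}{d(d-1)}\geq 0$. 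These are the same argument unwound, and your version is self-contained and arguably cleaner. (Minor slip: the diagonal entries of $\mathbb{I}-F$ on $\ket{ii}$ are $0$, not $d-1$, as your own parenthetical says; this does not affect anything downstream.)

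The $C_R$ part is where your proposal has genuine problems. First, the claim that ``$C_w\leq C_R$ always holds (the weight decomposition is a special case feasible for robustness)'' is false: the weight decomposition $\rho=(1-s)\sigma+s\tau$ gives $\sigma=\tfrac{\rho-s\tau}{1-s}$, which is \emph{not} of the robustness form $\tfrac{\rho+s\tau}{1+s}$, and indeed any almost-incoherent pure state has $C_w=1$ but $C_R$ arbitrarily small, so the inequality fails badly. You even note earlier in the same paragraph that the chain $C_w\leq C_R\leq C_{l_1}$ need not hold, so this is internally inconsistent. Second, the decisive upper bound $C_R\leq\alpha$ is only gestured at: you acknowledge that constructing the mixing state $\tau$ is ``the main obstacle'' without resolving it. The gap is fixable in two ways: (i) the paper's route, which simply cites $C_R=C_{l_1}$ for generalized $X$ states from Ref.~\cite{Piani2016}; or (ii) use the general inequality $C_R\leq C_{l_1}$ (valid for all states, as the paper notes) to get $C_R\leq\alpha$, and pair it with your witness lower bound --- which does work, but only with a \emph{sign-flipped} maximally coherent vector (e.g.\ $c_{ij}=+1$ for $i<j$, $c_{ij}=-1$ for $i>j$) so that the negative off-diagonals contribute constructively; the unmodified $\ket{\psi_+}$ gives only the trivial bound $s\geq-\alpha$. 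Alternatively the explicit $\tau=\tfrac{\mathbb{I}+F-2\sum_i\proj{ii}}{d(d-1)}$ is positive semidefinite and makes $\rho_W(\alpha)+\alpha\tau$ diagonal, closing the construction you left open. As written, the $C_R=\alpha$ claim is not established.
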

\begin{proof}
First, it is easy to see that the $l_1$ norm of coherence of $\rho_w(\alpha)$ is $\alpha$. Now, since Werner states are generalized $X$ states, therefore $C_R(\rho_W(\alpha))= C_{l_1}(\rho_W(\alpha))$ as the robustness of coherence and the $l_1$ norm coherence are equal for generalized $X$ states \cite{Piani2016}. According to Proposition \ref{prop:6}, we have
$C_w(\rho_W(\alpha))\geq C_{l_1}(\rho_W(\alpha))=\alpha$ as $\rho_W(\alpha)$ are generalized $X$ states. Moreover, due to the convexity of the coherence weight, we have $C_w(\rho_W(\alpha))\leq \alpha C_w\left(\frac{\mathbb{I}-F}{d(d-1)}\right)+(1-\alpha)C_w\left(\frac{\mathbb{I}}{d^2}\right)=\alpha$. Therefore, $C_w(\rho_W(\alpha))= C_R(\rho_W(\alpha))= C_{l_1}(\rho_W(\alpha))=\alpha$.
\end{proof}

\subsubsection{Gisin states}
Gisin states are a family of two-qubit states introduced in Ref. \cite{Gisin1996} and can be written as $\rho_{\lambda,\theta}=\lambda\proj{\psi(\theta)}+(1-\lambda)\sigma_0$, where
$\ket{\psi(\theta)}=\sin\theta \ket{01}-\cos\theta \ket{10}$,
$\sigma_0=\frac{1}{2}\proj{00}+\frac{1}{2}\proj{11}$, $\lambda\in[0,1]$ and $\theta\in[0, 2\pi]$. These states are considered ``local'' in the sense that they do not violate any Bell-Clauser-Horne-Shimony-Holt inequality, but it was shown that after interaction with two independent environments, these states can violate a Bell inequality \cite{Gisin1996}.\\

Note that
$\rho_{\lambda, \theta}$ reduces to an incoherent state when $\lambda=0$ or $\sin\theta \cos\theta=0$. Thus, we consider the
nontrivial case: $\lambda\in(0, 1]$ and $\theta\in (0, \pi/2)$.
\begin{prop}
For  Gisin state $\rho_{\lambda,\theta}$ with $\lambda\in(0,1]$ and $\theta\in (0, \pi/2)$,
we have $C_w(\rho_{\lambda,\theta})=\lambda$ and $C_{l_1}(\rho_{\lambda,\theta})=C_{R}(\rho_{\lambda,\theta})=\lambda|\sin2\theta|$.
\end{prop}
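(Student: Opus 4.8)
The plan is to dispatch the three quantifiers in increasing order of difficulty. First I would write $\rho_{\lambda,\theta}$ out in the computational basis $\{\ket{00},\ket{01},\ket{10},\ket{11}\}$: the diagonal is $(\tfrac{1-\lambda}{2},\lambda\sin^2\theta,\lambda\cos^2\theta,\tfrac{1-\lambda}{2})$ and the only nonzero off-diagonal entries are $\rho_{01,10}=\rho_{10,01}=-\lambda\sin\theta\cos\theta=-\tfrac{\lambda}{2}\sin 2\theta$. The $l_1$ norm of coherence is then read off directly as $C_{l_1}(\rho_{\lambda,\theta})=2\cdot\tfrac{\lambda}{2}\abs{\sin 2\theta}=\lambda\abs{\sin 2\theta}$. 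Since the coherence is carried entirely by the $\ket{01}$--$\ket{10}$ anti-diagonal pair, $\rho_{\lambda,\theta}$ is a generalized $X$ state, so the identity $C_R=C_{l_1}$ for such states (Ref.~\cite{Piani2016}) immediately gives $C_R(\rho_{\lambda,\theta})=\lambda\abs{\sin 2\theta}$.

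The real content is the evaluation $C_w(\rho_{\lambda,\theta})=\lambda$, which I would obtain by matching an upper and a lower bound. The upper bound is free: $\sigma_0$ is incoherent and $\rho_{\lambda,\theta}=(1-\lambda)\sigma_0+\lambda\proj{\psi(\theta)}$ is already a decomposition of the form appearing in the definition of $C_w$, so $C_w(\rho_{\lambda,\theta})\le\lambda$. For the lower bound I would use the operator-inequality characterization $1-C_w(\rho)=\max\{\Tr{\widetilde{\sigma}}:0\le\widetilde{\sigma}\le\rho,\ \Delta(\widetilde{\sigma})=\widetilde{\sigma}\}$, the coherence analogue of \eqref{eq:SDP1}. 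Writing a feasible diagonal $\widetilde{\sigma}=\mathrm{diag}(a_0,a_1,a_2,a_3)$, positivity of $\rho_{\lambda,\theta}-\widetilde{\sigma}$ decouples along the block structure of the state.

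The decisive step is the $2\times2$ block on $\mathrm{span}\{\ket{01},\ket{10}\}$: its positivity forces the determinant condition $(\lambda\sin^2\theta-a_1)(\lambda\cos^2\theta-a_2)\ge\tfrac{\lambda^2}{4}\sin^2 2\theta$. The key observation I would exploit is that the right-hand side equals $(\lambda\sin^2\theta)(\lambda\cos^2\theta)$ exactly, which is the largest value the left-hand side can take under the constraints $0\le a_1\le\lambda\sin^2\theta$ and $0\le a_2\le\lambda\cos^2\theta$ coming from positivity of the same block's diagonal. Hence $a_1=a_2=0$ is forced. The two surviving $1\times1$ constraints then give $a_0,a_3\le\tfrac{1-\lambda}{2}$, so $\Tr{\widetilde{\sigma}}=a_0+a_3\le 1-\lambda$ and therefore $C_w(\rho_{\lambda,\theta})\ge\lambda$; combined with the upper bound this yields the claim. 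I expect this last block argument to be the only subtle point: it is precisely the fact that the coherence $\tfrac{\lambda}{2}\sin 2\theta$ saturates the determinant bound at the corner of the feasible box that pins $a_1,a_2$ to zero, and it explains why $C_w$ comes out equal to $\lambda$ and independent of $\theta$, unlike $C_{l_1}$ and $C_R$.
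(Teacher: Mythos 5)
Your proposal is correct, and for the hard part --- the lower bound $C_w(\rho_{\lambda,\theta})\ge\lambda$ --- it takes a genuinely different route from the paper. The paper argues by contradiction: assuming $C_w<\lambda$, it sandwiches the operator inequality $\rho_{\lambda,\theta}\ge(1-C_w)\rho_0$ with the vector $\ket{\psi^{\perp}(\theta)}=\cos\theta\ket{01}+\sin\theta\ket{10}$ (on which $\rho_{\lambda,\theta}$ vanishes) to force $\Innerm{01}{\rho_0}{01}=\Innerm{10}{\rho_0}{10}=0$, and then sandwiches with $\ket{00}$, using a without-loss-of-generality choice $p_0\ge\tfrac12$, to reach a contradiction. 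You instead analyze the primal feasibility problem $\max\{\Tr{\widetilde{\sigma}}:0\le\widetilde{\sigma}\le\rho,\ \Delta(\widetilde{\sigma})=\widetilde{\sigma}\}$ directly: the determinant condition on the $\mathrm{span}\{\ket{01},\ket{10}\}$ block is saturated exactly at $a_1=a_2=0$, and the two scalar blocks then give $\Tr{\widetilde{\sigma}}\le 1-\lambda$ outright. The two arguments rest on the same structural fact --- the coherent $2\times2$ block of $\rho_{\lambda,\theta}$ is rank one, hence dominates no nonzero diagonal matrix (the paper detects this via the kernel vector $\ket{\psi^{\perp}(\theta)}$, you via the vanishing determinant) --- but your version is constructive rather than by contradiction, dispenses with the $p_0\ge\tfrac12$ step, and makes it transparent why $C_w$ comes out $\theta$-independent while $C_{l_1}$ and $C_R$ do not. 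Your treatment of the upper bound (exhibiting the decomposition $\rho_{\lambda,\theta}=(1-\lambda)\sigma_0+\lambda\proj{\psi(\theta)}$) and of $C_{l_1}=C_R=\lambda\abs{\sin2\theta}$ matches the paper's.
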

\begin{proof}
It is easy to see that $C_{l_1}(\rho_{\lambda,\theta})=\lambda|\sin2\theta|$. Also, since Gisin states can be regarded as $X$ states, we have $\lambda|\sin2\theta|=C_{l_1}(\rho_{\lambda,\theta})=C_{R}(\rho_{\lambda,\theta})$ \cite{Piani2016}. Moreover, the convexity of $C_w$ implies that $C_w(\rho_{\lambda,\theta})\leq \lambda C_w(\proj{\psi(\theta)})+(1-\lambda)C_w(\sigma_0)\leq \lambda$. If $C_w(\rho_{\lambda,\theta})<\lambda$, then there exists an incoherent state $\rho_0$ such that $\rho_{\lambda,\theta}\geq (1-C_w(\rho_{\lambda,\theta}))\rho_0$. Now, consider
a pure state $\ket{\psi^{\perp}(\theta)}=\cos\theta \ket{01}+\sin\theta \ket{10}$ which is orthogonal to $\ket{\psi(\theta)}$. Then,
\begin{eqnarray*}
\Innerm{\psi^{\perp}(\theta)}{\rho_{\lambda,\theta}}{\psi^{\perp}(\theta)}
\geq(1-C_w(\rho_{\lambda,\theta}))\Innerm{\psi^{\perp}(\theta)}{\rho_0}{\psi^{\perp}(\theta)},
\end{eqnarray*}
implies that $0\geq \Innerm{\psi^{\perp}(\theta)}{\rho_0}{\psi^{\perp}(\theta)} =\cos^2\theta \Innerm{01}{\rho_0}{01}+ \sin^2\theta \Innerm{10}{\rho_0}{10}$.
Therefore, $\Innerm{01}{\rho_0}{01}=\Innerm{01}{\rho_0}{01}=0$ and we can write $\rho_0=p_{0}\proj{00}+p_{1}\proj{11} $ with $p_{0}+p_{1}=1$. Without loss of any generality, we can assume that $p_0\geq\frac{1}{2}$.
Now, $0=\lambda\iinner{00}{\psi(\theta)}\iinner{\psi(\theta)}{00}\geq (1-C_w(\rho_{\lambda,\theta}))\Innerm{00}{\rho_0}{00}-(1-\lambda)\Innerm{00}{\sigma_0}{00}$, which implies that $0\geq p_0(1-C_w(\rho_{\lambda,\theta}))-\frac{1}{2}(1-\lambda)>0$. Therefore, $C_w(\rho_{\lambda,\theta})=\lambda$ for Gisin states $\rho_{\lambda, \theta}$. This concludes the proof of the Proposition.
\end{proof}

\subsection{All pure coherent states have coherence weight 1}
If $\ket{\psi}$ is a coherent  pure state, then considering a decomposition of $\proj{\psi}$ as $\proj{\psi}=(1-s)\sigma+s\tau$, where $\sigma\in \mathcal{I}$ and $\tau\in\cD(\cH)$, implies that $s=1$. That is, the coherence weight of any pure coherent state is always equal to 1. Operationally this means that the coherence weight is a coarse-grained measure and cannot distinguish between different pure coherent states.\\

At this point, it is important to note that there exist some quantum states $\rho$ for which
$C_w(\rho)>C_{l_1}(\rho)$ and there also exist some states $\sigma$ such that $C_w(\sigma)\leq C_{l_1}(\sigma)$, based on the results obtained in this paper. This is different from the relationship $C_R(\rho)\leq C_{l_1}(\rho)$, which is true for any quantum state.

\subsection{Coherence weight for more general bipartite quantum states}
\begin{prop}
For any two quantum states $\rho_1\in\cD(\cH_1)$ and $\rho_2\in\cD(\cH_2)$, we have
\begin{align}
\label{ineq:subad}
&C_w(\rho_1\ot \rho_2)\leq C_w(\rho_1)+C_w(\rho_2)-C_w(\rho_1)C_w(\rho_2)~\mathrm{and}\\
&C_R(\rho_1\ot \rho_2)\leq C_R(\rho_1)+C_R(\rho_2)+C_R(\rho_1)C_R(\rho_2).
\end{align}
\end{prop}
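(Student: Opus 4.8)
The plan is to prove both inequalities by the same device: take the \emph{optimal} decompositions guaranteed by the definitions of $C_w$ and $C_R$ for the factors $\rho_1$ and $\rho_2$, tensor them together, and regroup the result into an explicit (generally suboptimal) feasible decomposition for $\rho_1\ot\rho_2$. Since each quantifier is defined as a minimization, exhibiting one feasible point already yields the claimed upper bound. It is worth noting that the two target inequalities are exactly the submultiplicativity statements $1-C_w(\rho_1\ot\rho_2)\geq(1-C_w(\rho_1))(1-C_w(\rho_2))$ and $1+C_R(\rho_1\ot\rho_2)\leq(1+C_R(\rho_1))(1+C_R(\rho_2))$, so in both cases the natural object to track is the product of the factors' ``incoherent parts.'' Throughout I use that the tensor product of two incoherent states is incoherent, since it is diagonal in the product reference basis.

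For the coherence-weight bound, I would write the optimal decompositions $\rho_i=(1-s_i)\sigma_i+s_i\tau_i$ with $s_i=C_w(\rho_i)$, $\sigma_i\in\mathcal{I}$ and $\tau_i\in\cD(\cH_i)$, and expand the product $\rho_1\ot\rho_2$ into its four terms. The term $(1-s_1)(1-s_2)\,\sigma_1\ot\sigma_2$ is incoherent, while the remaining three terms $(1-s_1)s_2\,\sigma_1\ot\tau_2$, $s_1(1-s_2)\,\tau_1\ot\sigma_2$, and $s_1 s_2\,\tau_1\ot\tau_2$ have coefficients summing to $s_1+s_2-s_1 s_2$. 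Dividing their sum by this total produces a bona fide density operator $\tau$ (a convex mixture of product states), so $\rho_1\ot\rho_2=(1-s)\,\sigma_1\ot\sigma_2+s\tau$ with $s=s_1+s_2-s_1 s_2$, and hence $C_w(\rho_1\ot\rho_2)\leq s$ directly from the definition.

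For the robustness bound, the cleaner route is the ``noise-added'' form: optimality of $C_R(\rho_i)=s_i$ gives states $\tau_i$ with $\rho_i+s_i\tau_i=(1+s_i)\sigma_i$ for some $\sigma_i\in\mathcal{I}$. Tensoring the two unnormalized operators gives $(\rho_1+s_1\tau_1)\ot(\rho_2+s_2\tau_2)=(1+s_1)(1+s_2)\,\sigma_1\ot\sigma_2$. Expanding the left-hand side and isolating $\rho_1\ot\rho_2$ shows that the operator added to $\rho_1\ot\rho_2$ is $T=s_1\,\tau_1\ot\rho_2+s_2\,\rho_1\ot\tau_2+s_1 s_2\,\tau_1\ot\tau_2$. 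This $T$ is manifestly positive semidefinite (a nonnegative combination of positive operators) with $\Tr{T}=s_1+s_2+s_1 s_2=:s$, so $\tau=T/s$ is a state and $(\rho_1\ot\rho_2+s\tau)/(1+s)=\sigma_1\ot\sigma_2\in\mathcal{I}$, whence $C_R(\rho_1\ot\rho_2)\leq s$.

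The only genuine subtlety, and the step I would be most careful about, is the positivity and normalization of the noise state in each case. For $C_w$ it is automatic, since $\tau$ is a convex combination of the three product density operators. For $C_R$ the key is precisely the choice of form: written as the noise-addition identity, the bundled term $T$ is a sum of positive operators, whereas the naive pseudo-mixture $\rho_i=(1+s_i)\sigma_i-s_i\tau_i$ would generate a $-s_1 s_2\,\tau_1\ot\tau_2$ contribution and destroy manifest positivity. Finally, I would dispose of the degenerate case $s=0$ (which forces both $\rho_i$ incoherent, so $\rho_1\ot\rho_2$ is incoherent and both sides vanish) in order to justify the normalization $\tau=T/s$.
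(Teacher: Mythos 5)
Your proof is correct and follows essentially the same route as the paper: for $C_w$ you tensor the optimal weight decompositions and bundle the three non-incoherent terms into a single noise state, and for $C_R$ you tensor the noise-added identities $\rho_i+s_i\tau_i=(1+s_i)\sigma_i$ so that the added operator is manifestly positive. The paper presents the $C_R$ step slightly more tersely (it simply observes that the normalized tensor product of the two noise-added operators is incoherent), but the content is identical, and your explicit treatment of positivity and of the degenerate case $s=0$ is a harmless refinement.
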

\begin{proof}
Consider the optimal decompositions of $\rho_1$ and $\rho_2$ as
\begin{eqnarray*}
\rho_1=(1-C_w(\rho_1))\sigma^*_1+C_w(\rho_1)\tau^*_1,\\
\rho_2=(1-C_w(\rho_2))\sigma^*_2+C_w(\rho_2)\tau^*_2,
\end{eqnarray*}
where $\sigma^*_1\in\mathcal{I}_1$, $\sigma^*_2\in\mathcal{I}_2$, $\tau^*_1\in\cD(\cH_1)$, and $\tau^*_2\in\cD(\cH_2)$. Then, we have
\begin{eqnarray*}
\rho_1\ot \rho_2
&=&(1-C_w(\rho_1))(1-C_w(\rho_2))\sigma^*_1\ot\sigma^*_2\\
&&+(1-C_w(\rho_1))C_w(\rho_2)\sigma^*_1\ot \tau^*_2\\
&&+C_w(\rho_1)(1-C_w(\rho_2))\tau^*_1\ot\sigma^*_2\\
&&+C_w(\rho_1)C_w(\rho_2)\tau^*_1\ot\tau^*_2,
\end{eqnarray*}
where $\sigma^*_1\ot\sigma^*_2$ is an incoherent state in $\cD(\cH_1\ot\cH_2)$. The above equation implies that $C_w(\rho_1\ot \rho_2)\leq 1-(1-C_w(\rho_1))(1-C_w(\rho_2))=C_w(\rho_1)+C_w(\rho_2)-C_w(\rho_1)C_w(\rho_2)$.

Similarly, for the optimal incoherent states $\delta^*_1$ and $\delta^*_2$
such that
\begin{eqnarray*}
\frac{\rho_1+C_R(\rho_1)\delta^*_1}{1+C_R(\rho_1)}\in \mathcal{I}_1 ~\mathrm{and}~
\frac{\rho_2+C_R(\rho_2)\delta^*_2}{1+C_R(\rho_2)}\in \mathcal{I}_2,
\end{eqnarray*}
the state
\begin{eqnarray*}
\frac{[\rho_1+C_R(\rho_1)\delta^*_1]\ot [\rho_2+C_R(\rho_2)\delta^*_2]}{(1+C_R(\rho_1))(1+C_R(\rho_2))}
\end{eqnarray*}
is an incoherent state in $\cD(\cH_1\ot\cH_2)$. This implies that $C_R(\rho_1\ot \rho_2)\leq (C_R(\rho_1)+1) (C_R(\rho_2)+1)-1$. This completes the proof of the Proposition.
\end{proof}

\begin{widetext}

\begin{figure}
\centering     %%% not \center
\subfigure[]{\includegraphics[scale=0.25]{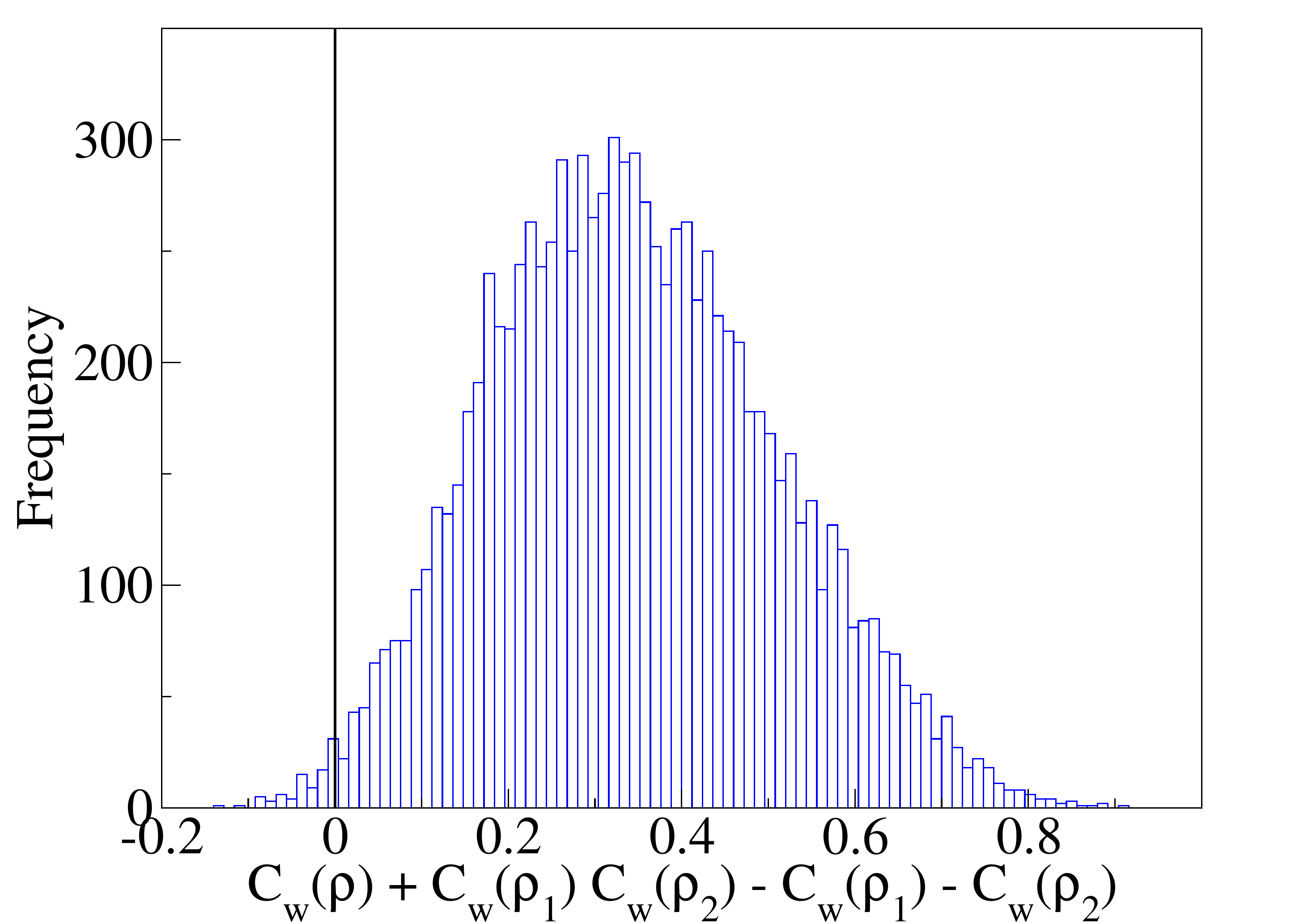}}
\subfigure[]{\includegraphics[scale=0.25]{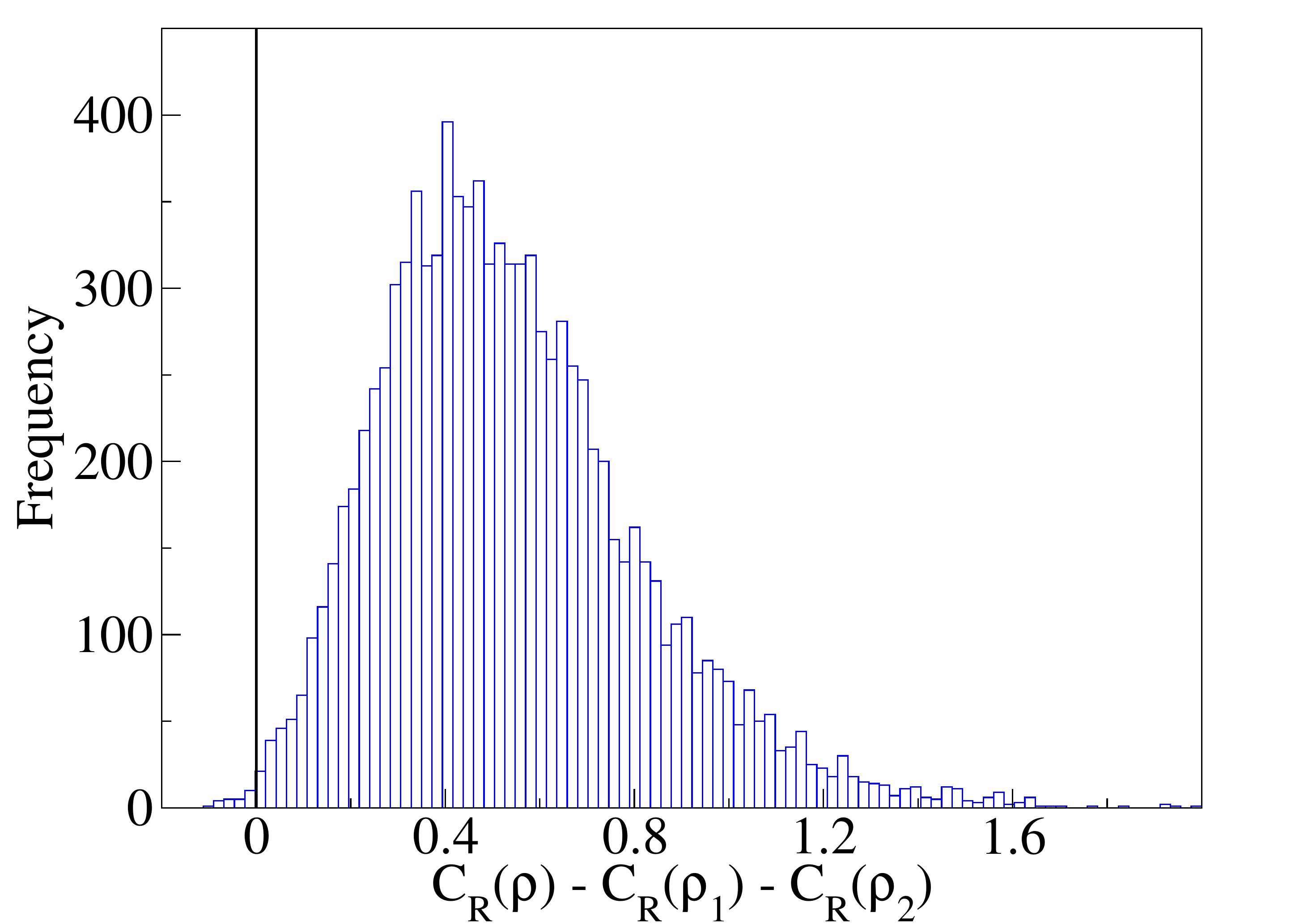}}
\caption{(Color online) Violation of Eqs. \eqref{w_dis} and \eqref{R_dis} for two-qubit mixed states. Figures (a) and (b) show the histograms of $C_w(\rho)+C_w(\rho_1)C_w(\rho_2) -  (C_w(\rho_1)+C_w(\rho_2))$ and $C_R(\rho) - (C_R(\rho_1)+C_R( \rho_2))$, respectively. All the axes are unitless. If Eqs. \eqref{w_dis} and \eqref{R_dis} were to hold true for arbitrary bipartite mixed states then there would not exist any states to the left of the vertical black line drawn in the figures. This, however, is not the case as one can see for ${10}^4$ randomly generated two-qubit mixed states (obtained by the partial tracing of the Haar distributed random pure states of dimension $4\ot4$). The coherence weight is calculated using a MATLAB code which we provide in the supplemental material \cite{supplement} and the robustness of coherence $C_R$ is calculated using QETLAB \cite{qetlab}.}
\label{fig:propositiontest}
\end{figure}

\end{widetext}

The inequality \eqref{ineq:subad} also implies
% the subadditivity of coherence weight, i.e., 
$C_w(\rho_1\ot \rho_2)\leq C_w(\rho_1)+C_w(\rho_2)$. Additionally, the relationship between the coherence weight (or robustness of coherence) in
$\rho$ and the coherence weight (or robustness of coherence) in $\rho_1$ and $\rho_2$ can play an important role in the distribution of coherence on bipartite systems \cite{Radhakrishnan2016}.

\begin{prop}
If $\rho \in \cD(\cH_1\ot\cH_2)$ is a pure state with reduced states
$\rho_1=\Ptr{2}{\rho}$ and $\rho_2=\Ptr{1}{\rho}$, then
\begin{eqnarray}
\label{w_dis}C_w(\rho)&&\geq C_w(\rho_1)+C_w(\rho_2)-C_w(\rho_1)C_w(\rho_2),\\
\label{R_dis}C_R(\rho)&&\geq C_R(\rho_1)+C_R( \rho_2).
\end{eqnarray}
However, for general bipartite states, the relationships \eqref{w_dis} and \eqref{R_dis}
need not hold (see Figure \ref{fig:propositiontest}).
\end{prop}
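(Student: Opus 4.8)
The plan is to treat the two inequalities separately, since they have quite different characters. For the coherence-weight bound \eqref{w_dis} I would exploit the coarse-grained nature of $C_w$ on pure states established just above. A pure state $\rho=\proj{\psi}$ is either incoherent or coherent. If it is incoherent then $\ket{\psi}$ must be a single product basis vector $\ket{ij}$, so both marginals $\rho_1=\Ptr{2}{\rho}$ and $\rho_2=\Ptr{1}{\rho}$ are incoherent, whence $C_w(\rho_1)=C_w(\rho_2)=0$ and both sides vanish. If $\rho$ is coherent then $C_w(\rho)=1$, while the right-hand side $1-(1-C_w(\rho_1))(1-C_w(\rho_2))$ is at most $1$ because $C_w(\rho_1),C_w(\rho_2)\in[0,1]$. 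In either case the inequality holds, so \eqref{w_dis} is essentially immediate.

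For the robustness bound \eqref{R_dis} a direct witness construction is tempting but fails: gluing the optimal marginal witnesses as $Y_1\ot\mathbb{I}+\mathbb{I}\ot Y_2-\mathbb{I}\ot\mathbb{I}$ respects the diagonal constraint but is not positive semidefinite, since the optimal $Y_k$ carry a zero eigenvalue. Instead I would route everything through the $l_1$ norm of coherence via the chain
\[
C_R(\rho_1)+C_R(\rho_2)\le C_{l_1}(\rho_1)+C_{l_1}(\rho_2)\le C_{l_1}(\rho)=C_R(\rho).
\]
The first inequality is the general relation $C_R\le C_{l_1}$ recalled in the text, applied to each (mixed) marginal. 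The final equality is the standard fact that $C_R=C_{l_1}$ on pure states, which I would verify from the reformulation $1+C_R(\rho)=\min\{\Tr{\Lambda}:\Lambda\ \text{diagonal},\ \Lambda\ge\rho\}$ (derived from the definition of $C_R$ exactly as \eqref{def2:asy_w} was derived for the weight); for $\rho=\proj{\psi}$ a rank-one/Cauchy--Schwarz computation with $\ell_{ij}\propto|\psi_{ij}|$ gives the optimum $(\sum_{ij}|\psi_{ij}|)^2=1+C_{l_1}(\proj{\psi})$.

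The crux is therefore the middle inequality: subadditivity of $C_{l_1}$ over the marginals of a \emph{pure} state. Writing $m_{ij}=|\psi_{ij}|$, $r_i=\sum_j m_{ij}$, $c_j=\sum_i m_{ij}$, and $S=\sum_{ij}m_{ij}$, the triangle inequality applied to $(\rho_1)_{ii'}=\sum_j\psi_{ij}\bar\psi_{i'j}$ yields $C_{l_1}(\rho_1)\le\sum_j c_j^2-1$ and likewise $C_{l_1}(\rho_2)\le\sum_i r_i^2-1$, while $C_{l_1}(\rho)=S^2-1$. It then suffices to prove the purely combinatorial bound $\sum_i r_i^2+\sum_j c_j^2\le S^2+1$. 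Expanding each term as a sum over quadruples $(i,j,i',j')$ and collecting coefficients according to whether $i=i'$ and whether $j=j'$, every coincident contribution cancels and one is left with the identity
\[
S^2+1-\sum_i r_i^2-\sum_j c_j^2=\sum_{\substack{i\ne i'\\ j\ne j'}}m_{ij}m_{i'j'}\ge0,
\]
the nonnegativity being clear since $m_{ij}\ge0$.

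This last cancellation is the main obstacle and the heart of the argument; it is exactly where purity enters, through the single modulus matrix $m_{ij}$ shared by both marginals. Once it is in hand the chain above closes and \eqref{R_dis} follows. I would close by noting that both statements are genuinely pure-state phenomena: for general $\rho$ neither $C_R=C_{l_1}$ nor the combinatorial cancellation survives, which is consistent with the explicit mixed-state violations exhibited in Figure~\ref{fig:propositiontest}.
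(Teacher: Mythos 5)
Your proof is correct and follows essentially the same route as the paper: the first inequality \eqref{w_dis} from the fact that any coherent pure state has $C_w=1$ (and both sides vanish in the incoherent case), and the second inequality \eqref{R_dis} via the identical chain $C_R(\rho)=C_{l_1}(\rho)\geq C_{l_1}(\rho_1)+C_{l_1}(\rho_2)\geq C_R(\rho_1)+C_R(\rho_2)$. The only difference is that you prove the two middle ingredients inline where the paper cites them, which makes the argument self-contained; note, though, that the superadditivity of $C_{l_1}$ over the marginals holds for arbitrary bipartite states (the same triangle-inequality bookkeeping applies to $(\rho_1)_{ii'}=\sum_j\rho_{ij,i'j}$), so purity is needed only for the endpoints $C_R=C_{l_1}$ and $C_w=1$, not for your combinatorial cancellation.
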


\begin{proof}
For coherence weight, the result comes directly from the fact that the coherence weight of any coherent pure state always attains the maximal value 1.

If $\rho$ is a bipartite pure state with reduced states $\rho_1=\Ptr{2}{\rho}$ and $\rho_2=\Ptr{1}{\rho}$, then
\begin{eqnarray}\nonumber
C_R(\rho)=C_{l_1}(\rho)
&\geq& C_{l_1}(\rho_1)+C_{l_1}(\rho_2)\\
&\geq& C_R(\rho_1)+C_R(\rho_2).
\end{eqnarray}
Here the first inequality follows from the
fact that the $l_1$ norm of coherence in any bipartite state is larger than the sum of the $l_1$ norm of coherence in the reduced states \cite{Bu2015a}. The second inequality comes from the relation $C_{l_1}(\rho)\geq C_R(\rho)$ \cite{Piani2016}.
\end{proof}

\section{Conclusion}\label{sec:con}
In this paper, we introduce the notion of the asymmetry weight and the coherence weight in the resource theories of asymmetry and coherence, respectively. The asymmetry and the coherence weight satisfy some interesting properties such as convexity and monotonicity, and thereby qualify as bona fide measures of asymmetry and coherence, respectively. These measures can also be interpreted operationally as the minimum number of genuine resource states needed in the preparation process of a given quantum state under the restrictions imposed by the relevant resource theory. Interestingly, these measures can easily be computed numerically for arbitrary quantum states since they can be characterized as the solutions of the corresponding semidefinite programs. Moreover,   
we show that coherent (asymmetric) pure quantum states have coherence (asymmetry) weight equal to 1. Importantly, we analytically find the exact coherence weight for some classes of bipartite mixed states such as the Werner states and Gisin states, which are subsets of the generalized $X$ states. For Werner states, we find that the coherence weight, the robustness of coherence, and the $l_1$ norm of coherence are all equal and are given by a single letter formula. Similarly, for Gisin states, we find closed-form expressions for the coherence weight, the robustness of coherence, and the $l_1$ norm of coherence.
% we find a single letter formula for the coherence weight.
In general, for bipartite states, we establish various relationships between the coherence weight, the robustness of coherence, and the $l_1$ norm of coherence. Our results imply that there exist some quantum states for which the coherence weight is greater than or equal to the $l_1$ norm of coherence and there also exist some states for which the opposite holds. This is in stark contrast to the fact that the $l_1$ norm of coherence is always greater than or equal to the robustness of coherence. 

Moreover, the SDP form of the asymmetry weight readily allows us to establish a plausible connection with the (state-dependent) asymmetry witnesses. As the swap entanglement witness can be viewed as a special asymmetry witness, this suggests that asymmetry may be applied to detect the existence of entanglement in a given bipartite state. Furthermore, in the context of the distribution of quantum coherence, we provide some relationships between the coherence weight (the robustness of coherence) of a given bipartite state and the coherence weight (the robustness of coherence) of its marginals.
% The subadditivity of the coherence weight then follows naturally as a special case.

We hope that the operational interpretation and the ease to calculate the coherence (asymmetry) weight for an arbitrary quantum state make these measures desirable and therefore, may help in improving our understanding of these two resources at a quantitative level. Also, given the connection between the asymmetry weight and the entanglement witnesses, it will be an important future direction to find the exact relationship between the asymmetry and the entanglement.\\

\begin{acknowledgments}
This paper is supported by the National Natural Science Foundation of China (Grants No. 11171301 and No. 11571307 ). K.B. acknowledges L. Li for various discussions and help. N.A. acknowledges the Harish-Chandra Research Institute for its hospitality during the preparation of this paper. U.S. acknowledges support from a research fellowship of Department of Atomic Energy, Government of India.
\end{acknowledgments}

\bibliographystyle{apsrev4-1}

 \bibliography{Asym-lit}

\end{document}